\newtheorem{theorem}{Theorem}[section]
\newtheorem{proposition}[theorem]{Proposition}
\theoremstyle{definition}
\newtheorem{remark}[theorem]{Remark}
\begin{document}


\title{ Discrete  integrable systems associated with relativistic collisions}
\author{Theodoros E. Kouloukas 
\\
\\School of Mathematics and Physics, \\ University of Lincoln, UK}

\date{ }
\maketitle

\begin{abstract}
We study vector quadrirational Yang--Baxter maps representing the momentum-energy transformation of 
two particles after elastic relativistic collisions. The collision maps admit Lax representations compatible with an $r$-matrix Poisson structure and 
correspond to integrable systems of quadrilateral lattice equations.

\end{abstract}

\bigskip

\section{Introduction}
The integrability of discrete systems (systems of ordinary or 
partial difference equations) is closely related to the concept 
of multidimensional consistency. This can be depicted as the three-dimensional (3D) consistency for equations on two-dimensional lattices \cite{ABS1,bobsur,NW}, with fields assigned to the vertices of elementary quadrilaterals, and as the (set theoretical) Yang--Baxter equation \cite{Baxter,buch,Drin,skly88,ves2,Yang} for 
maps with fields assigned to the edges of the quadrilaterals. Both the 3D consistency and the Yang--Baxter property reflect the compatibility of the equations/maps when extended to the three-dimensional lattice, and they are related to typical integrability features such as zero-curvature representations, {B\"acklund--Darboux} transformation, invariant Poisson structures,  recursion operators, symmetries  and conserved quantities (see e.g. \cite{ABS2,AdYa,sokor,kp3,MWX,Ni,paptonves,ves2}). 

In this article we study particle collision problems in the framework of discrete integrable systems. In \cite{Koul}, 
it was shown that the velocity transformation of two head-on elastically colliding particles satisfies the Yang--Baxter equation with the two masses acting as the Yang--Baxter parameters.  In the classical (non-relativistic) case the collision Yang--Baxter map is linear and is linked to a discrete wave equation (see  the relevant discussion in section \ref{clascol}). In the relativistic case, the induced collision map is a non-rational parametric Yang--Baxter map which can be transformed, under a non-rational change of variables, to a rational one associated with well-known integrable lattice equations of KdV type. This surprising link between discrete integrable systems and particle collision problems motivates this work. 

The aim of this paper is to investigate various integrability aspects of relativistic collision problems which extend the results of \cite{Koul}. We will introduce and study integrable, in the sense of multidimensional consistency,  birational maps and affine linear equations associated with elastic collisions. In the non-relativistic case the velocity 
transformation under the collision of two particles generates  linear systems that satisfy all the desirable properties. However, in the relativistic case we show that the momentum-energy vectors, rather than velocities, form a more natural set of variables. The momentum-energy  transformation of the colliding particles is a vector quadrirational  Yang--Baxter map which corresponds to an affine linear system of 3D consistent quadrilateral equations. Moreover, a higher-dimensional generalisation  associated with planar collisions is presented, which admits a Lax representation compatible with the Sklyanin bracket. Thus we derive invariant Poisson structures for transfer maps, which represent particular sequences of periodic colliding particles, on the two-dimensional lattice and Poisson commuting first integrals.

Section \ref{SecNot} includes a short introduction to the theory of  
3D consistent equations and Yang--Baxter maps. 
As a first application, we present the linear systems associated with classical elastic collisions. In section \ref{SecRel1}, we study head-on relativistic collisions. We review the velocity transformation map and introduce the momentum-energy transformation Yang--Baxter map, its Lax representation and the corresponding 3D consistent lattice equations. Higher dimensional  generalisations of these systems are presented in section \ref{SecHigherGen}, where we also study invariant Poisson structures, reductions and 
the Liouville integrability of the transfer maps. We conclude in section \ref{SecConc} with further comments and perspectives for future work.

\section{Integrable lattice equations, Yang--Baxter maps and classical collisions} \label{SecNot}

We consider an equation of the form $Q(f,f_{1},f_2,f_{12};\alpha,\beta)=0$, where the four variables  $f,f_{1},f_2,f_{12}\in \mathbb{C}^n$ are assigned to the four vertices of a quadrilateral and the parameters $\alpha,\beta$ are assigned to its edges (Figure \ref{fig1}). We assume that this equation is affine linear, that is linear with respect to any one of the arguments $f,f_1,f_2$ and $f_{12}$. Next, we consider fixed values $f,f_1,f_2,f_3$ at four vertices of a cube as in Figure \ref{fig1} (black points). 
By employing the same equation at the corresponding vertices of the down, front and left faces of the cube we can determine uniquely 
the values $f_{12},f_{13}$ and $f_{23}$. Hence, we can determine the value $f_{123}$ in three different ways by the rest three faces of the cube.  If all these three values coincide, i.e. $f_{123}$ is uniquely defined as a function of $f,f_1,f_2,f_3$,
then the equation $Q(f,f_{1},f_2,f_{12};\alpha,\beta)=0$ is called {\it {3D consistent}} or {\it consistent around the cube}   {\cite{ABS1,bobsur,NW}}.

\begin{figure}
\begin{center}
\includegraphics[width=0.32\linewidth]{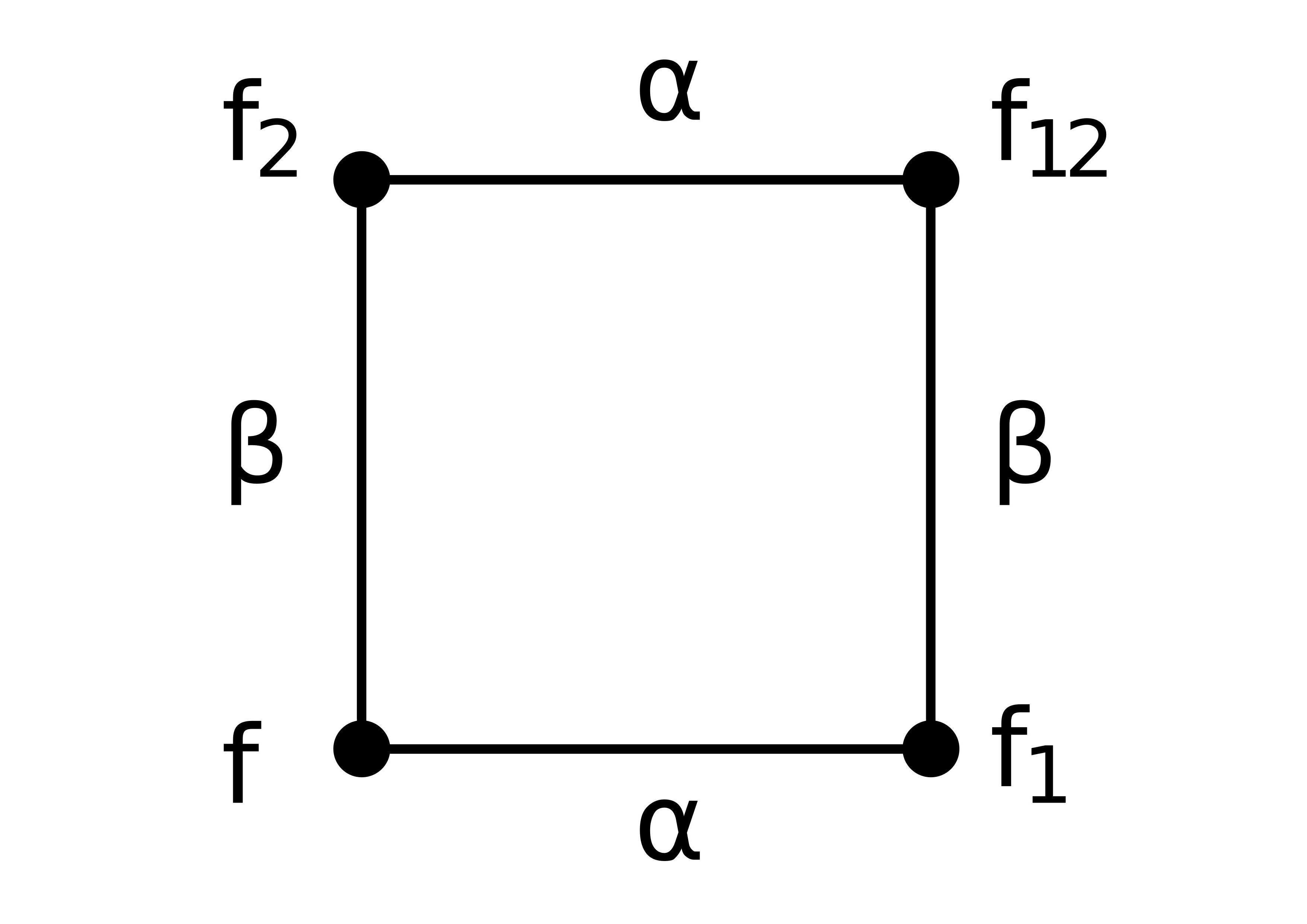} 
\includegraphics[width=0.42\linewidth]{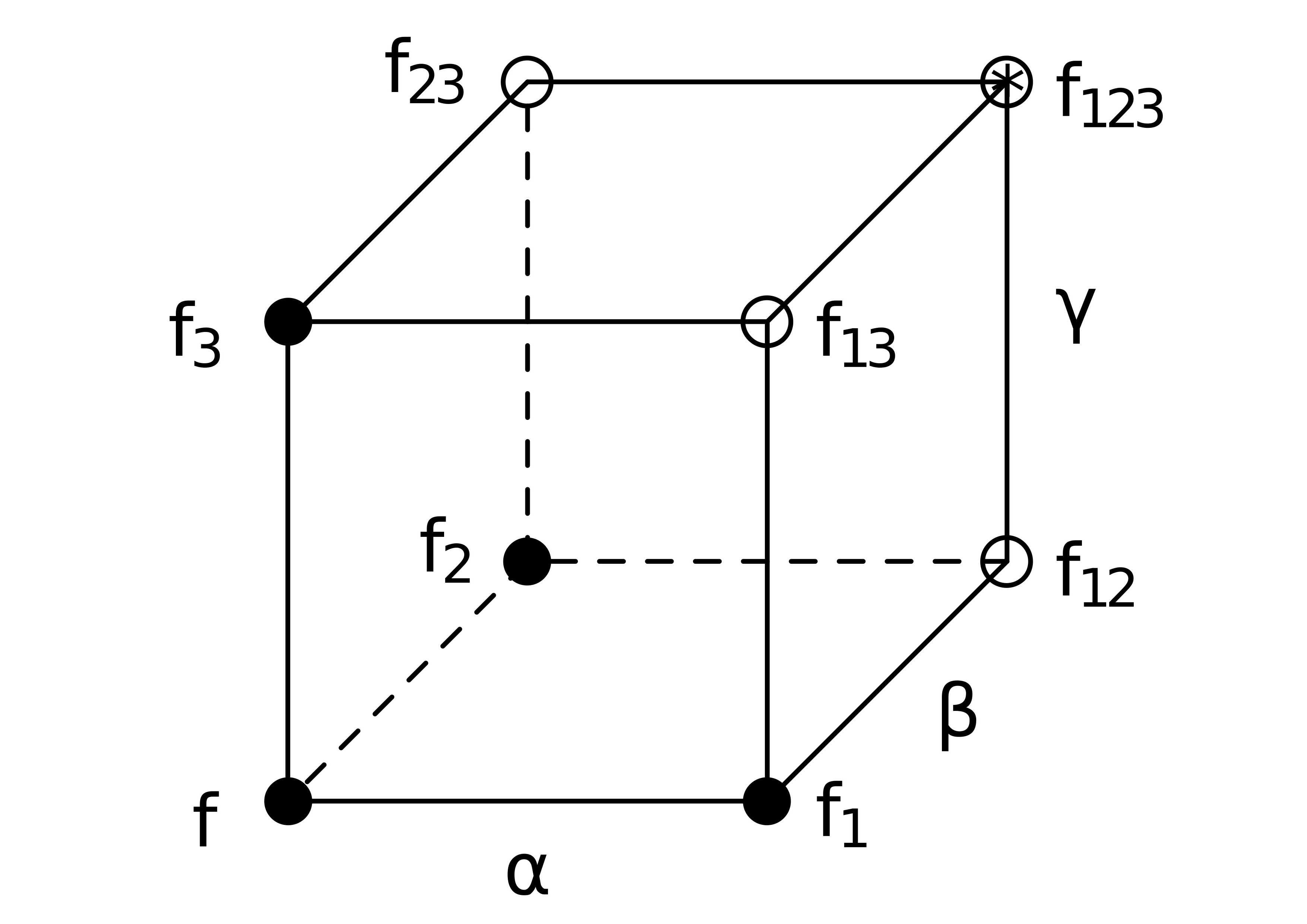} 
\caption{3D consistency} \label{fig1}
\end{center}
\end{figure}

\newpage

We can regard the equation $Q(f,f_{1},f_2,f_{12};\alpha,\beta)=0$ as defined on a two-dimensional quadrilateral lattice with fields $f_{i,j}:\mathbb{Z}^2 \rightarrow \mathbb{C}^n$, by setting $f=f_{n,m}, f_1=f_{n+1,m}, f_2=f_{n,m+1}$ and  $f_{12}=f_{n+1,m+1}$, thus
\begin{equation} \label{latEq} 
Q(f_{n,m},f_{n+1,m},f_{n,m+1},f_{n+1,m+1};\alpha,\beta)=0.
\end{equation}
The 3D consistency  then indicates that the lattice equation \eqref{latEq} can be embedded in a three dimensional lattice in a compatible way. Adler, Bobenko and Suris presented in \cite{ABS1} a classification of 3D consistent equations up to common M\"obius transformations of the variables.
An equivalent formulation of 3D consistency can be traced back in \cite{AdYa} 
(see also \cite{kp4,paptonves}). 

The equivalent of 3D consistency in the case of maps defined on the edges of a quadrilateral is the Yang--Baxter equation \cite{Baxter,  Drin, Yang}. Following   {\cite{buch,ves2}}, we will call a map $R: \mathcal{X} \times
\mathcal{X} \rightarrow \mathcal{X} \times \mathcal{X}$, with 
$R:(x,y)\mapsto (u(x,y),v(x,y))$, a {\it Yang--Baxter map} if it satisfies the {\it set-theoretical Yang--Baxter equation}, 
\begin{equation}\label{YBprop}
 R_{23}\circ R_{13}\circ
R_{12}=R_{12}\circ R_{13}\circ R_{23},
\end{equation} where $R_{ij}$,
for $i,j=1,2,3$, denotes the action of the map $R$ on the $i$ and
$j$ factor of $\mathcal{X} \times \mathcal{X} \times \mathcal{X}$,
i.e. $R_{12}(x,y,z)=(u(x,y),v(x,y),z)$,
$R_{13}(x,y,z)=(u(x,z),y,v(x,z))$ and
$R_{23}(x,y,z)=(x,u(y,z),v(y,z))$. In general 
$\mathcal{X}$ can be any set, however here we will  
regard $\mathcal{X}$ as an algebraic 
variety. Furthermore, we will call the map $R$ {\it quadrirational}, if both maps $u( \cdot ,y)$, $v(x, \cdot )$, for fixed $y$ and $x$ respectively, are birational isomorphisms  of $\mathcal{X}$ to itself \cite{ABS2}. 
By considering $R$ as a map on the edges of an elementary quadrilateral, we can interpret the Yang--Baxter equation \eqref{YBprop} as the compatibility of the map embedded on the faces of a 3D cube as in Figure \ref{fig2}.

\begin{figure}
\begin{center}
\includegraphics[width=0.62\linewidth]{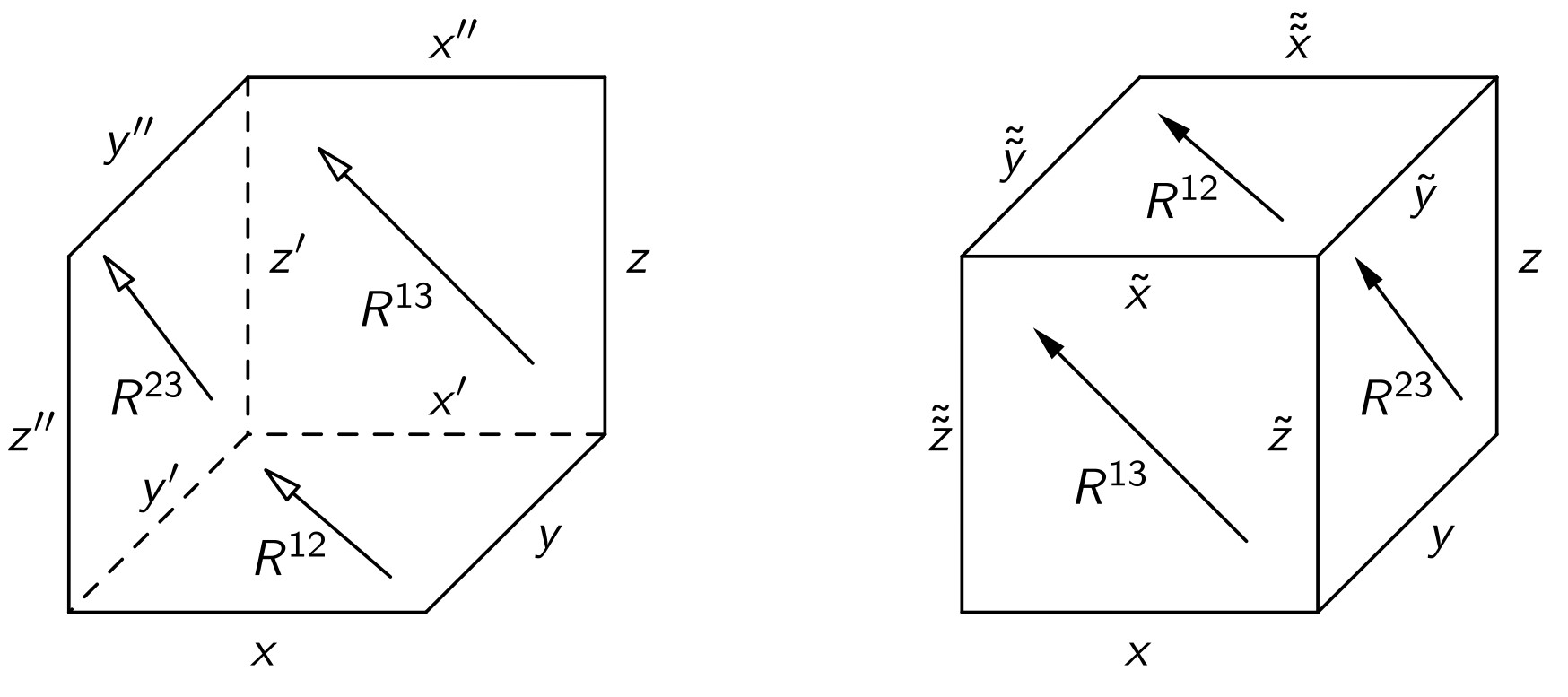} 
\caption{The cubic representation of the Yang--Baxter equation} \label{fig2}
\end{center}
\end{figure}

A \emph{parametric Yang--Baxter map} \cite{ves2,ves3} is a Yang--Baxter map  
 $R:(\mathcal{X} \times
\mathcal{I}) \times (\mathcal{X} \times \mathcal{I}) \mapsto
(\mathcal{X} \times \mathcal{I}) \times (\mathcal{X} \times
\mathcal{I})$, with
\begin{equation} \label{pYB}
R:((x,\alpha),(y,\beta))\mapsto((u,\alpha),(v,\beta))= ((u(x,\alpha,
y,\beta),\alpha),(v(x,\alpha, y,\beta),\beta)). 
\end{equation}
So, the Yang--Baxter parameters $\alpha, \beta \in \mathcal{I}$ are considered as extra variables that remain invariant 
under the map $R$. 
We usually
keep the parameters $\alpha, \beta \in \mathcal{I}$ separate and denote (\ref{pYB}) just by 
$R_{\alpha,\beta}:\mathcal{X}\times \mathcal{X} \rightarrow
\mathcal{X} \times \mathcal{X}$. 
Classifications of parametric Yang--Baxter maps on $\mathbb{CP}^1 \times \mathbb{CP}^1$ have been presented 
in \cite{ABS2,papclas}. 

A matrix $L$ that depends on a point $x \in \mathcal{X}$,  
a parameter $\alpha \in \mathcal{I}$ and a spectral parameter $\zeta\in \mathbb{C}$, such
that
\begin{equation} \label{laxmat}
L(u,\alpha,\zeta)L(v,\beta,\zeta)=L(y,\beta,\zeta)L(x,\alpha,\zeta)
\end{equation}
is called a {\em Lax matrix} of the Yang--Baxter map $R_{\alpha,\beta}$ \cite{ves4,ves2,ves3}. 
On the other hand, 
if $u=u_{\alpha,\beta}(x,y)$ and $v=v_{\alpha,\beta}(x,y)$ satisfy
(\ref{laxmat}) for a matrix $L$  and the equation $$L(
\hat{x},\alpha,\zeta )L( \hat{y} ,\beta,\zeta )L(\hat{z}, \gamma,\zeta )= L(x
,\alpha,\zeta)L(y, \beta,\zeta)L(z, \gamma,\zeta)$$ implies the unique solution $\hat{x}=x, \
\hat{y}=y$ and $\hat{z}=z$ for every $x,y,z \in \mathcal{X}$, then it follows that the map 
$R_{\alpha,\beta}:(x,y)\mapsto(u,v)$ is a Yang--Baxter map with Lax
matrix $L$ \cite{kp1,ves2}.   

3D consistent equations generate solutions of the Yang--Baxter equation and vice versa. This connection relies on the symmetries of the equations and the invariant conditions of the maps \cite{KasNi,kp4,paptong,paptonves,shib}.   { A similar approach can be considered for multi-component systems of difference equations \cite{KasNiPT}}. The following  proposition for Yang--Baxter maps defined on quasigroups which satisfy an invariant condition appears in  \cite{shib} (see also  \cite{kp4} for applications in the case of parametric Yang--Baxter maps). 
\begin{proposition} \label{YBto3D}
Let $R:L \times L \rightarrow L \times L $, $R(x,y)=(u(x,u),v(x,y))$ be a Yang--Baxter map on the quasigroup $(L,*)$ with $v(x,y)*u(x,y)=x*y$. Then 
\begin{equation} \label{Shib}
f_{n,m+1}=f_{n,m}*v(f_{n,m} \backslash f_{n+1,m},f_{n+1,m} \backslash f_{n+1,m+1})\;,
\end{equation}
where $\backslash$ denotes the left division operation, is a 3D consistent equation. 
\end{proposition}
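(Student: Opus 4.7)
\medskip

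\noindent\textit{Proof plan.} The strategy is to pass from vertex variables to edge variables, under which \eqref{Shib} becomes nothing but the Yang--Baxter map $R$ itself, and then to deduce 3D consistency from the Yang--Baxter equation \eqref{YBprop}. On an elementary quadrilateral with vertices $f, f_1, f_2, f_{12}$, I would introduce the four edge variables
\[
X = f \backslash f_1, \quad Y = f_1 \backslash f_{12}, \quad V = f \backslash f_2, \quad U = f_2 \backslash f_{12},
\]
so that $f_1 = f*X$, $f_{12} = f_1*Y$, $f_2 = f*V$ and $f_{12} = f_2*U$. The first step is to verify that \eqref{Shib} is equivalent to $R(X,Y)=(U,V)$. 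The formula in the proposition gives $V = v(X,Y)$ directly, and the invariant condition $v(X,Y)*u(X,Y) = X*Y$ together with the uniqueness of left division in the quasigroup then forces $U = u(X,Y)$.

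Next I would extend this picture to the elementary cube. Starting from vertices $f, f_1, f_2, f_3$ with outgoing edge variables $\xi_i = f \backslash f_i$, the equation \eqref{Shib} applied on the three faces meeting at $f$ produces $f_{12}, f_{13}, f_{23}$; by the first step, this is precisely the action of $R_{12}$, $R_{13}$, $R_{23}$ on the corresponding pairs drawn from $(\xi_1, \xi_2, \xi_3)$. Repeating the construction on the three opposite faces computes the edge variables on the three edges incident to $f_{123}$ in the two orders corresponding exactly to the two sides of \eqref{YBprop}. The Yang--Baxter property of $R$ therefore guarantees that these three edge variables are well defined.

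Finally, $f_{123}$ is reconstructed from $f$ by successively applying $*$ along any path in the cube, and since all edge variables agree, the three candidate values of $f_{123}$ coincide, giving 3D consistency. The main technical point, and the only one I expect to require real care, is the first paragraph: tracking the interplay between the binary operation $*$ and its division $\backslash$ in the (not necessarily associative) quasigroup $(L,*)$, so that one can legitimately identify $U = u(X,Y)$ from the invariant condition and move freely between the vertex and the edge descriptions. Once this translation is firmly in place, the 3D consistency of \eqref{Shib} becomes a direct transcription of the Yang--Baxter property of $R$.
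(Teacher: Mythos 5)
The paper itself offers no proof of Proposition~\ref{YBto3D}: it is quoted from Shibukawa \cite{shib}. So your attempt can only be measured against the standard edge--vertex correspondence, which is indeed the strategy you adopt and which does work verbatim when $(L,*)$ is a group --- the setting of every application in the paper except the quasigroup $(\mathbb{C}\backslash\{0\},\ x*y=y/x)$ used for \eqref{mKdV}.

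The gap sits exactly where you say ``real care'' is needed, and what you wrote does not close it. From \eqref{Shib} you correctly read off $V=v(X,Y)$. But $U$ is \emph{defined} by $f_{12}=(f*V)*U$, while you also have $f_{12}=(f*X)*Y$; the invariant condition only controls the single product $v(X,Y)*u(X,Y)=X*Y$. To conclude $U=u(X,Y)$ you must first pass from $(f*V)*U=(f*X)*Y$ to $V*U=X*Y$ (after which left division by $V$ does finish the job), and that cancellation of $f$ through a double product is an associativity statement: you need $(f*a)*b$ to depend on $(a,b)$ only through $a*b$. Uniqueness of left division cancels the left factor of a \emph{single} product and gives you nothing here. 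The same defect recurs in your final step, where $f_{123}$ is ``reconstructed from $f$ by successively applying $*$ along any path'': in a nonassociative quasigroup the endpoint of a path depends on the path even when all edge variables agree. For a group everything is immediate, and for $x*y=y/x$ one checks $(f*a)*b=f\cdot(a*b)$ (ordinary product), so your argument closes in every case the paper actually uses; but as a proof of the proposition for an arbitrary quasigroup it is incomplete --- you would need either to add such a compatibility hypothesis and verify it where the proposition is invoked, or to follow the (dynamical) mechanism of \cite{shib} that is designed precisely to absorb the failure of associativity. A secondary, more cosmetic point: the initial data for 3D consistency is the star of edges at $f$, whereas the Yang--Baxter relation \eqref{YBprop} acts on a staircase of edges $f\to f_1\to f_{12}\to f_{123}$, so invoking quadrirationality to convert between the two pictures should be made explicit.
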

 In the case of an abelian group $(L,+)$, the invariant condition becomes $u(x,y)+v(x,y)=x+y$ and the corresponding 3D consistent equation 
\begin{equation} \label{Shibf}
f_{n,m+1}=f_{n,m}+v(f_{n+1,m}-f_{n,m},f_{n+1,m+1}-f_{n+1,m})\;. 
\end{equation}

\subsection{Classical head-on collision systems} \label{clascol}
As a first example, we consider the map $R^0_{m_1,m_2}:(v_1,v_2)\mapsto(v_1',v_2')$, with  
\begin{equation} \label{clv}
v_1'=\frac{v_1(m_1-m_2)+2 m_2 v_2}{m_1+m_2}, \  v_2'=\frac{v_2(m_2-m_1)+2 m_1 v_1}{m_1+m_2}\;.
\end{equation}
This linear map represents the transformation of velocities of two particles with masses $m_1$, $m_2$ respectively after elastic (non-relativistic) collision. Here $v_1$, $v_2$ denote the initial velocities of the two particles and $v_1'$, $v_2'$  the corresponding velocities after the collision. 

The linear map $R^0_{m_1,m_2}$ satisfies the parametric Yang--Baxter equation \cite{Koul} and the invariant condition
$$m_1v_1+m_2v_2=m_1v_1'+m_2v_2'\;,$$ which reflects the conservation of momentum. 
The Yang--Baxter map $R^0_{m_1,m_2}$ appears in various different contexts in literature. In \cite{DimHos}, it describes the change of polarization throughout the tropical limit graph of soliton solutions of the vector KdV equation,  
while in \cite{KasNi} appears as a limit of Hirota's KdV equation. 

Now, if we consider vertex variables $w_{k,l}$ with  
\begin{align*}
&v_1=m_2(w_{k+1,l}-w_{k,l})\;, \ v_1=m_1(w_{k+1,l+1}-w_{k+1,l})\;, \\
& v_1'=m_2(w_{k+1,l+1}-w_{k,l+1})\;, \ v_2'=m_1(w_{k,l+1}-w_{k,l})\;, 
\end{align*}
then equations \eqref{clv} imply 
the discrete linear wave equation: 
\begin{equation} \label{dwave}
(m_2-m_1)(w_{n+1,m+1}-w_{n,m})+(m_2+m_1) (w_{n+1,m}-w_{n,m+1})=0\;.
\end{equation}
We can show directly that \eqref{dwave} is a 
3D consistent equation. Linear quadrilateral 3D consistent equations are described in \cite{atk} and linear Yang--Baxter maps in \cite{buchIg}.

\section{Head-on relativistic collisions and integrable lattice systems} \label{SecRel1}

In this section we present the Yang--Baxter maps and the  integrable lattice equations associated with head-on relativistic collisions. The parametric Yang--Baxter map which corresponds to the velocity transformation is not a rational map but it is equivalent, under a non-rational change of variables which preserves the Yang--Baxter property, to a quadrirational one associated with the discrete modified and the Schwarzian KdV equations. On the other hand, the 
map that represents the momentum-energy transformation after collision is by default quadrirational. This (non-parametric)  higher-dimensional map satisfies the Yang--Baxter equation too, and corresponds to a system of 3D consistent lattice equations.  

\subsection{Discrete systems associated with velocity transformations}
In \cite{Koul}, we have shown that the transformation of the velocities of two head-on  colliding particles with invariant masses $m_1$, $m_2$, after an elastic relativistic collision is given by the {\it collision map}
\begin{equation} \label{colYB}
\mathcal{R}_{m_1,m_2}= (\phi^{-1} \times \phi^{-1}) \circ {R}_{m_1,m_2} \circ (\phi \times \phi), 
\end{equation} 
where $\phi:(-c,c) \rightarrow (0,+ \infty)$ is the bijection $\phi(v)=\sqrt{\frac{c+v}{c-v}}$, 
$(\phi \times \phi)(v_1,v_2)=(\phi(v_1),\phi(v_2))$, $c$ is the speed of light and $R_{m_1,m_2}$ the quadrirational  map 
\begin{equation} \label{YBab}
{R}_{m_1,m_2}(x,y)=\left(y \left( \frac{m_1 x+m_2 y}{m_2 x+m_1 y}\right),x \left(\frac{m_1 x+m_2 y}{m_2 x+m_1 y} \right) \right):=(u,v)\;.   
\end{equation} 
Both maps, $R_{m_1,m_2}$ and $\mathcal{R}_{m_1,m_2}$ are parametric Yang--Baxter maps with Lax matrices 
\begin{equation} \label{Lax}
L(x,\alpha,\zeta)=\left(
\begin{array}{cc}
  \zeta & \alpha x \\
 \frac{\alpha}{x} &  \zeta
\end{array}
\right) \ \text{and} \ \mathcal{L}(x,m_1,\zeta)=L(\phi(x),m_1,\zeta)
\end{equation}
respectively. 

The collision map \eqref{colYB} is not rational but it is equivalent under the non-rational transformation $\phi$ to the quadrirational Yang--Baxter map \eqref{YBab} (for the Yang--Baxter equivalence see \cite{papclas}). The latter map corresponds to 
the $H_{III}^{A}$ map of the classification list in \cite{papclas}  
under the transformation $x \mapsto \alpha \hat{x}$, $y \mapsto \beta \hat{y}$,  $u \mapsto \alpha \hat{u}$, $v \mapsto \beta \hat{v}$
and the reparametrization $\alpha^2=\hat{\alpha}, \ \beta^2=\hat{\beta}$, 
and to  the $F_{III}$ map of the classification list in  
\cite{ABS2}, for 
$x \mapsto -\alpha x$, $y \mapsto \beta \hat{y}$, $u \mapsto \alpha \hat{u}$, $v \mapsto -\beta \hat{v}$ 
and the same reparametrization.

Now, from \eqref{YBab}, by setting  $x=w_{k+1,l}w_{k,l}$, $y=w_{k+1,l+1}w_{k+1,l}$, $u=w_{k+1,l+1}w_{k,l+1}$ and $v=w_{k,l+1}w_{k,l}$, we come up with the 3D consistent equation
\begin{equation} \label{mKdV}
w_{k+1,l}(\alpha w_{k,l}+\beta w_{k+1,l+1})-w_{k,l+1}(\beta w_{k,l}+\alpha w_{k+1,l+1})=0\;,
\end{equation}
which is a discrete version of the potential modified KdV equation \cite{NQC} and corresponds under 
a gauge transformation to $H_3$ equation in \cite{ABS1} (for $\delta=0$).
We can derive this equation directly from Proposition \ref{YBto3D} by considering the quasigroup $L=\mathbb{C}\backslash \{0\}$ with binary operation $x*y=\frac{y}{x}$, for any $x,y \in L$, (so that the left division is then defined as $x \backslash y=xy$), which also proves the 3D consistency of this equation. 

There is one more 3D consistent equation associated with the Yang--Baxter map \eqref{YBab} and the invariant condition $m_1u+m_2v=m_1x+m_2y$, that is the discrete Schwarzian KdV or cross-ratio lattice equation ($Q_1$ for $\delta=0$ of the ABS classification list \cite{ABS1}),  
\begin{equation} \label{SKdV}
\frac{(w_{k,l}-w_{k,l+1})(w_{k+1,l+1}-w_{k+1,l})}{(w_{k,l+1}-w_{k+1,l+1})(w_{k+1,l}-w_{k,l})}=\frac{m_2^2}{m_1^2}\;,
\end{equation}
which is obtained by considering vertex variables $w_{k,l}$, such that  $m_2(w_{k+1,l}-w_{k,l})=x$, $m_1(w_{k+1,l+1}-w_{k+1,l})=y$, $m_2(w_{k+1,l+1}-w_{k,l+1})=u$ and $m_1(w_{k,l+1}-w_{k,l})=v$.

\subsection{Discrete systems associated with  momentum-energy \\  transformations}

We consider the two colliding particles with rest masses $m_1$, $m_2$ and initial momentum-energy vectors $$\mathbf{x}=(x_0,x_1)=(\frac{E_x}{c},p_x)\;, \   \mathbf{y}=(y_0,y_1)=(\frac{E_y}{c},p_y)\;$$ respectively. Here, $E_x,E_y$ denote the relativistic energy of the two particles before collision and $p_x,p_y$ their momenta. We also denote by 
$$\mathbf{u}=(u_0,u_1)=(\frac{E_u}{c},p_u)\;, \   \mathbf{v}=(v_0,v_1)=(\frac{E_v}{c},p_v)\;$$
the momentum-energy vectors of the particles after collision, with  $E_u,E_v$ and $p_u, p_v$ their corresponding energies and momenta. The conservation of relativistic energy and momentum then reads 
\begin{equation} \label{invcond1}
\mathbf{u}+\mathbf{v}=\mathbf{x}+\mathbf{y}\;.
\end{equation}
Furthermore, the energy-momentum relation, $E^2=(pc^2)+(mc^2)^2$, for each particle implies that
\begin{equation} \label{invcond2}
u_0^2-u_1^2=x_0^2-x_1^2=m_1^2c^2\;, \  v_0^2-v_1^2=y_0^2-y_1^2=m_2^2c^2\;.
\end{equation}
The system of \eqref{invcond1} and \eqref{invcond2} admits two solutions with respect to $\mathbf{u}$ and $\mathbf{v}$, the trivial one $\mathbf{u}=\mathbf{x}$,  $\mathbf{v}=\mathbf{y}$, which corresponds to no-collision, and the collision solution 
\begin{align} \label{uv}
\mathbf{u}=\mathbf{y}+k(\mathbf{x},\mathbf{y})(\mathbf{x}+\mathbf{y}) \;,
\mathbf{v}=\mathbf{x}-k(\mathbf{x},\mathbf{y})(\mathbf{x}+\mathbf{y})\;, 
\end{align}
with  
\begin{equation} \label{K}
k(\mathbf{x},\mathbf{y})=\frac{({x}_0 ^2-{x}_1 ^2)-({y}_0^2-{y}_1^2)}{(x_0+y_0)^2-(x_1+y_1)^2} =\frac{\langle\mathbf{x},\mathbf{x}\rangle-\langle\mathbf{y},\mathbf{y}\rangle}{\langle\mathbf{x}+\mathbf{y},\mathbf{x}+\mathbf{y}\rangle}\;.
\end{equation}
Here, by $\langle \ , \ \rangle$ we denote the bilinear form  $\langle\mathbf{x},\mathbf{y}\rangle:=x_0 y_0-x_1 y_1$.  Hence, the invariant conditions   \eqref{invcond1}-\eqref{invcond2} imply that $\langle\mathbf{x},\mathbf{x}\rangle=\langle\mathbf{u},\mathbf{u}\rangle$, $\langle\mathbf{y},\mathbf{y}\rangle=\langle\mathbf{v},\mathbf{v}\rangle$ and $k(\mathbf{x},\mathbf{y})=k(\mathbf{u},\mathbf{v})$. 

Now, we define as the {\it momentum-energy map} the map 
\begin{equation} \label{MEmap}
\mathbf{R}:(\mathbf{x},\mathbf{y})\mapsto (\mathbf{u},\mathbf{v}):=(\mathbf{y}+k(\mathbf{x},\mathbf{y})(\mathbf{x}+\mathbf{y}),\mathbf{x}-k(\mathbf{x},\mathbf{y})(\mathbf{x}+\mathbf{y}))\;, 
\end{equation}
which maps the initial collision momentum-energy vectors of the two particles to the corresponding vectors after collision.

\begin{proposition} \label{thm1}
The momentum-energy map \eqref{MEmap} is a (non-parametric) quadrirational Yang--Baxter map with Lax matrix 
\begin{equation} \label{LaxME}
\mathbf{L}(\mathbf{x},\zeta)=\left(
\begin{array}{cc}
  \zeta & \ x_0+x_1 \\
 x_0-x_1 &  \zeta
\end{array}
\right).
\end{equation}

\end{proposition}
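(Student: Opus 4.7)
The plan is to split the proof into three stages: (i) quadrirationality, (ii) verification of the Lax identity, and (iii) the set-theoretic Yang--Baxter property, deduced via the converse criterion stated after \eqref{laxmat}.

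For (i), the formulas \eqref{K}--\eqref{MEmap} exhibit $\mathbf{R}$ as manifestly rational, so it remains to invert the two companion maps. To invert $\mathbf{u}(\cdot,\mathbf{y})$, I would use that \eqref{uv} gives $\mathbf{u}-\mathbf{y}=k(\mathbf{x},\mathbf{y})(\mathbf{x}+\mathbf{y})$; pairing both sides with themselves via $\langle\cdot,\cdot\rangle$, and using $\langle\mathbf{u},\mathbf{u}\rangle=\langle\mathbf{x},\mathbf{x}\rangle$ (an identity that follows from \eqref{uv} and \eqref{K}) together with $k\,\langle\mathbf{x}+\mathbf{y},\mathbf{x}+\mathbf{y}\rangle=\langle\mathbf{x},\mathbf{x}\rangle-\langle\mathbf{y},\mathbf{y}\rangle$ yields
\[
k=\frac{\langle\mathbf{u}-\mathbf{y},\mathbf{u}-\mathbf{y}\rangle}{\langle\mathbf{u},\mathbf{u}\rangle-\langle\mathbf{y},\mathbf{y}\rangle},
\]
so that $\mathbf{x}=k^{-1}(\mathbf{u}-\mathbf{y})-\mathbf{y}$ is rational in $(\mathbf{u},\mathbf{y})$. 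The companion $\mathbf{v}(\mathbf{x},\cdot)$ is handled symmetrically.

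For (ii), the key simplification is to write $\mathbf{L}(\mathbf{x},\zeta)=\zeta I+M(\mathbf{x})$, where $M(\mathbf{x})$ is the off-diagonal matrix with nonzero entries $x_0\pm x_1$ and is in particular linear in $\mathbf{x}$. The Lax identity $\mathbf{L}(\mathbf{u},\zeta)\mathbf{L}(\mathbf{v},\zeta)=\mathbf{L}(\mathbf{y},\zeta)\mathbf{L}(\mathbf{x},\zeta)$ then splits by $\zeta$-degree into $M(\mathbf{u})+M(\mathbf{v})=M(\mathbf{x})+M(\mathbf{y})$, which is just \eqref{invcond1}, and $M(\mathbf{u})M(\mathbf{v})=M(\mathbf{y})M(\mathbf{x})$. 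Each side of the latter is diagonal, so it reduces to the two scalar identities $(u_0+u_1)(v_0-v_1)=(y_0+y_1)(x_0-x_1)$ and its mirror with the $\pm$ signs swapped; both follow by a short substitution from \eqref{uv}, with the algebra collapsing exactly through the definition \eqref{K} of $k$.

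For (iii), I appeal to the converse criterion stated after \eqref{laxmat}: it suffices to verify that $\mathbf{L}(\hat{\mathbf{x}},\zeta)\mathbf{L}(\hat{\mathbf{y}},\zeta)\mathbf{L}(\hat{\mathbf{z}},\zeta)=\mathbf{L}(\mathbf{x},\zeta)\mathbf{L}(\mathbf{y},\zeta)\mathbf{L}(\mathbf{z},\zeta)$, as a polynomial identity in $\zeta$, forces $(\hat{\mathbf{x}},\hat{\mathbf{y}},\hat{\mathbf{z}})=(\mathbf{x},\mathbf{y},\mathbf{z})$ generically. Writing $\mathbf{L}=\zeta I+M$ and matching coefficients of $\zeta^2,\zeta^1,\zeta^0$ produces six polynomial equations in the six unknown components: two linear equations from the off-diagonal $\zeta^2$-terms fixing the componentwise sum $\hat{\mathbf{x}}+\hat{\mathbf{y}}+\hat{\mathbf{z}}$, two quadratic equations from the diagonal $\zeta^1$-terms, and two cubic equations from the off-diagonal $\zeta^0$-terms. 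I expect the main technical obstacle to be showing that this system admits a unique solution generically; the triangular-in-degree structure should allow one to propagate from the linear constraints through the quadratic ones to the cubic ones and conclude, after which the criterion immediately yields \eqref{YBprop}.
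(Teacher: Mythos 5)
Your stages (i) and (ii) are correct and amount to the same computations as the paper's, packaged differently: the paper proves quadrirationality by rewriting the defining system \eqref{uv} in two equivalent forms that are rational in $(\mathbf{x},\mathbf{v})$ and in $(\mathbf{u},\mathbf{y})$, which is exactly the inversion you carry out via the rational expression for $k$ in terms of $(\mathbf{u},\mathbf{y})$; and the Lax identity is checked directly in both cases (your splitting by $\zeta$-degree into \eqref{invcond1} together with the two scalar relations $(u_0+u_1)(v_0-v_1)=(y_0+y_1)(x_0-x_1)$ and its mirror is a clean way to organise that check, and both relations do hold).

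Stage (iii), however, contains a genuine gap, and the step you yourself flag as ``the main technical obstacle'' is in fact false as stated. For this \emph{non-parametric} Lax matrix the trifactorization $\mathbf{L}(\hat{\mathbf{x}},\zeta)\mathbf{L}(\hat{\mathbf{y}},\zeta)\mathbf{L}(\hat{\mathbf{z}},\zeta)=\mathbf{L}(\mathbf{x},\zeta)\mathbf{L}(\mathbf{y},\zeta)\mathbf{L}(\mathbf{z},\zeta)$ does \emph{not} force $(\hat{\mathbf{x}},\hat{\mathbf{y}},\hat{\mathbf{z}})=(\mathbf{x},\mathbf{y},\mathbf{z})$. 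Indeed, the identities you establish in (ii) show equally that $\mathbf{L}(\mathbf{v},\zeta)\mathbf{L}(\mathbf{u},\zeta)=\mathbf{L}(\mathbf{x},\zeta)\mathbf{L}(\mathbf{y},\zeta)$ for $(\mathbf{u},\mathbf{v})=\mathbf{R}(\mathbf{x},\mathbf{y})$ (equivalently, $(\mathbf{v},\mathbf{u})=\mathbf{R}(\mathbf{y},\mathbf{x})$, since $k$ is antisymmetric in its arguments), so $(\hat{\mathbf{x}},\hat{\mathbf{y}},\hat{\mathbf{z}})=(\mathbf{v},\mathbf{u},\mathbf{z})$ is a second solution of your six-equation system whenever $\langle\mathbf{x},\mathbf{x}\rangle\neq\langle\mathbf{y},\mathbf{y}\rangle$. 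This is the generic obstruction for non-parametric Lax matrices: the two-factor refactorization with the factors taken in the \emph{same} order already admits a non-identity solution, and it propagates to the three-factor problem by leaving the third factor untouched. Consequently the converse criterion quoted after \eqref{laxmat} cannot be invoked here, and the Yang--Baxter property must be obtained another way --- either by the direct verification of \eqref{YBprop} that the paper performs, or by a parametric refinement in which the preserved quantities $\langle\mathbf{x},\mathbf{x}\rangle=m_1^2c^2$ and $\langle\mathbf{y},\mathbf{y}\rangle=m_2^2c^2$ are treated as Yang--Baxter parameters rigidifying the factorization (the spurious solution above interchanges the two mass levels and is then excluded); in that case the remaining uniqueness on the level sets would still have to be proved rather than merely expected.
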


\begin{proof}
The Yang--Baxter property of the map $\mathbf{R}$ can be checked directly. We can also verify that for $\mathbf{u}=\mathbf{y}+K(\mathbf{x}+\mathbf{y})$ and $\mathbf{v}=\mathbf{x}-K(\mathbf{x}+\mathbf{y})$, 
$$\mathbf{L}(\mathbf{u},\zeta)\mathbf{L}(\mathbf{v},\zeta)=
\mathbf{L}(\mathbf{y},\zeta)\mathbf{L}(\mathbf{x},\zeta)\;,$$
which shows that $\mathbf{L}(\mathbf{x},\zeta)$ is a Lax matrix of  $\mathbf{R}$. Finally, the quadrirationality of $\mathbf{R}$ follows by observing that system  \eqref{uv} is equivalent to 
\begin{equation*}
\mathbf{u}=-\mathbf{v}+k(\mathbf{x},-\mathbf{v})(\mathbf{x}-\mathbf{v})\;, \ 
\mathbf{y}=-\mathbf{x}+k(\mathbf{x},-\mathbf{v})(\mathbf{x}-\mathbf{v})\;,
\end{equation*}
and  to 
\begin{equation*}
\mathbf{v}=-\mathbf{u}+k(\mathbf{u},-\mathbf{y})(\mathbf{u}-\mathbf{y})\;, \ 
\mathbf{x}=-\mathbf{y}+k(\mathbf{u},-\mathbf{y})(\mathbf{u}-\mathbf{y})\;.
\end{equation*}
\end{proof}

  {
\begin{remark}
 If we replace the quadratic form $\langle\mathbf{x},\mathbf{y}\rangle=x_0 y_0-x_1 y_1$ that appears in the momentum-energy map \eqref{MEmap} with the scalar product 
$\langle\mathbf{x},\mathbf{y}\rangle=x_0 y_0+x_1 y_1$ of the Euclidean space $\mathbb{R}^2$, then we retrieve a map $R_k$ which (up to a permutation) was presented in \cite{Adler} as integrable deformations of a polygon. 
The two maps are not equivalent in the real space $\mathbb{R}^2\times \mathbb{R}^2$. The map $R_k$ preserves the circles $x_0^2+x_1^2=c_1^2$ and  $y_0^2+y_1^2=c_2^2$, for constants $c_1,c_2$, while the momentum-energy map $\mathbf{R}$ the parabolas $x_0^2-x_1^2=m_1^2c^2$ and 
$y_0^2-y_1^2=m_2^2c^2$. However, the two maps are equivalent by considering a complex change of variables. That is 
\begin{equation*} 
{R}_{k}= (\phi \times \phi) \circ {\mathbf{R}} \circ (\phi^{-1} \times \phi^{-1}) \;, 
\end{equation*} 
for $\phi(x_0,x_1)=(x_0,i x_1)$. 

\end{remark}
}

The momentum-energy Yang--Baxter map \eqref{MEmap} is reversible, which means that  $\mathbf{R}_{21}\circ \mathbf{R}=Id$ and  
it is an involution, 
$\mathbf{R}\circ \mathbf{R}=Id$. Moreover, it is associated with a system of 3D consistent lattice equation. We can derive this equation from proposition \ref{YBto3D} by considering the abelian group $(\mathbb{R}^2,+)$ and the 
invariant condition \eqref{invcond1}. In this case, equation \eqref{Shibf} becomes 
\begin{equation} \label{system3D}
\mathbf{w}_{n+1,m}-\mathbf{w}_{n,m+1}=k(\mathbf{w}_{n+1,m}-\mathbf{w}_{n,m},\mathbf{w}_{n+1,m+1}-\mathbf{w}_{n+1,m})(\mathbf{w}_{n+1,m+1}-\mathbf{w}_{n,m})\;. 
\end{equation}
We can derive this system directly from the  map \eqref{MEmap} by setting 
\begin{align*}
&\mathbf{x}=\mathbf{w}_{n+1,m}-\mathbf{w}_{n,m}, \ 
\ \ \mathbf{y}=\mathbf{w}_{n+1,m+1}-\mathbf{w}_{n+1,m}, \\ 
&\mathbf{u}=\mathbf{w}_{n+1,m+1}-\mathbf{w}_{n,m+1},  \ 
\mathbf{v}=\mathbf{w}_{n,m+1}-\mathbf{w}_{n,m}.
\end{align*}
For the vertex variables $\mathbf{w}_{n,m}=(f_{n,m},g_{n,m})$, the 
system \eqref{system3D} is expressed as
\begin{align} \label{sys21}
\frac{(f_{n+1,m+1}-f_{n,m+1})^2-(f_{n+1,m}-f_{n,m})^2}{(g_{n+1,m+1}-g_{n,m+1})^2-(g_{n+1,m}-g_{n,m})^2} &=1\;, \\
\frac{(f_{n+1,m+1}-f_{n+1,m})^2-(f_{n,m+1}-f_{n,m})^2} {(g_{n+1,m+1}-g_{n+1,m})^2-(g_{n,m+1}-g_{n,m})^2} &=1\;. \label{sys22}
\end{align}
  {
The vector equation \eqref{system3D}, or equivalently system \eqref{sys21}-\eqref{sys22}, is uniquely solvable with respect to any one of the arguments $\mathbf{w}_{n,m}=(f_{n,m},g_{n,m})$, $\mathbf{w}_{n+1,m}=(f_{n+1,m},g_{n+1,m})$, 
$\mathbf{w}_{n,m+1}=(f_{n,m+1},g_{n,m+1})$ and $\mathbf{w}_{n+1,m+1}=(f_{n+1,m+1},g_{n+1,m+1})$. Its 3D consistency follows from proposition \ref{YBto3D}. We can also obtain a Lax representation for this system from the Lax representation \eqref{LaxME} of the momentum-energy map, i.e. the solutions of \eqref{system3D} satisfy the equation 
$$\mathbf{L}(\mathbf{w}_{n+1,m+1}-\mathbf{w}_{n,m+1},\zeta)\mathbf{L}(\mathbf{w}_{n,m+1}-\mathbf{w}_{n,m},\zeta)=\mathbf{L}(\mathbf{w}_{n+1,m+1}-\mathbf{w}_{n+1,m},\zeta)\mathbf{L}(\mathbf{w}_{n+1,m}-\mathbf{w}_{n,m},\zeta).$$}


\section{Higher dimensional generalisations} \label{SecHigherGen}

The aim of this section is to generalise the results of the head-on collisions in higher dimensions. We will mainly focus on the two-dimensional space case. So, we consider that the two colliding particles, with rest masses $m_1$ and $m_2$, have two-dimensional vector momentums 
$\mathbf{p}_x=(p_{x_1},p_{x_2})$, $\mathbf{p}_y=(p_{y_1},p_{y_2})$ and 
$\mathbf{p}_u=(p_{u_1},p_{u_2})$, $\mathbf{p}_v=(p_{v_1},p_{v_2})$ before and after collision respectively. We also denote by 
$E_x$, $E_y$, and $E_u$, $E_v$ the corresponding relativistic energies of the particles and we define the momentum-energy  vectors  
\begin{align*}
&\mathbf{x}=(x_0,x_1,x_2):= 
(\frac{E_x}{c},\mathbf{p}_x)\;, \   \mathbf{y}=(y_0,y_1,y_2):= 
(\frac{E_y}{c},\mathbf{p}_y)\;,\\ 
&\mathbf{u}=(u_0,u_1,u_2):= 
(\frac{E_u}{c},\mathbf{p}_u)\;, \   \mathbf{v}=(v_0,v_1,v_2):=
(\frac{E_v}{c},\mathbf{p}_v)\;.
\end{align*}
The conservation of relativistic energy and momentum along with the energy-momentum relation imply the system 
\begin{equation} \label{system2}
\mathbf{u}+\mathbf{v}=\mathbf{x}+\mathbf{y}\;, \ 
\langle\mathbf{u},\mathbf{u}\rangle=\langle\mathbf{x},\mathbf{x}\rangle\;,\ \langle\mathbf{v},\mathbf{v}\rangle=\langle\mathbf{y},\mathbf{y}\rangle\;,
\end{equation}
where $\langle \ , \ \rangle$ denotes the quadratic form  $\langle\mathbf{x},\mathbf{y}\rangle:=x_0 y_0-x_1 y_1-x_2y_2$.

The system \eqref{system2} consists of five equations, so more information is needed to derive a unique solution with respect to $\mathbf{u}$ and $\mathbf{v}$. This could be, for instance, one of $u_i$, $v_i$, or the scattering angle of the two 
colliding particles after the collision, and can be linked to an extra parameter. 

\subsection{A 2D collision Yang--Baxter map }  

In this section, we present a higher dimensional Yang--Baxter map associated with relativistic collisions on the plane,  which satisfies an extra parametric invariant condition. This map  generalises the head-on collision Yang--Baxter maps of the previous section and corresponds to a system of 3D consistent lattice equations. 

\begin{theorem} \label{col2D}
The map $\mathbf{\hat{R}}_{\alpha,\beta}:(\mathbf{x},\mathbf{y})\mapsto (\mathbf{u},\mathbf{v})$, with  
\begin{align}
&\mathbf{u}= \ \hat k(\mathbf{x},\mathbf{y},q)\mathbf{x}+(1-\hat k(\mathbf{y},\mathbf{x},q))\mathbf{y}+2q \hat{\mathbf{r}}(\mathbf{x},\mathbf{y},q)\;, \label{u2d}\\ 
&\mathbf{v}=(1-\hat k(\mathbf{x},\mathbf{y},q))\mathbf{x}+\hat k(\mathbf{y},\mathbf{x},q)\mathbf{y}-2q \hat{\mathbf{r}}(\mathbf{x},\mathbf{y},q)  \label{v2d}\;,
\end{align}
where  
\begin{align*} \label{Kd2}
\hat k(\mathbf{x},\mathbf{y},q) =\frac{\langle\mathbf{x},\mathbf{x}\rangle-\langle\mathbf{y},\mathbf{y}\rangle+q^2}{\langle\mathbf{x}+\mathbf{y},\mathbf{x}+\mathbf{y}\rangle+q^2}\;, \ 
\hat{\mathbf{r}}(\mathbf{x},\mathbf{y},q) &=\frac{(x_1 y_2-x_2 y_1,x_0 y_2-x_2y_0,x_1 y_0-x_0 y_1)}{\langle\mathbf{x}+\mathbf{y},\mathbf{x}+\mathbf{y}\rangle+q^2}\;,
\end{align*}
 and $q=\alpha-\beta$, 
 is a parametric quadrirational Yang--Baxter map which satisfies the invariant conditions \eqref{system2} and admits a Lax representation with Lax matrix
 \begin{equation} \label{Lax2D}
  {\mathbf{L}(\mathbf{x},\alpha,\zeta)}=\left(
\begin{array}{cc}
  \zeta+i(x_2 -\alpha)  & \ x_0+x_1 \\
 x_0-x_1 &  \zeta-i(x_2+a)
\end{array}
\right).
\end{equation}
\end{theorem}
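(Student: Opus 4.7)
The plan is to verify in order: the invariant conditions \eqref{system2}, the Lax equation for the matrix \eqref{Lax2D}, the Yang--Baxter property (via uniqueness of matrix refactorization as recalled after \eqref{laxmat}), and quadrirationality.

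First, the additive conservation $\mathbf{u}+\mathbf{v}=\mathbf{x}+\mathbf{y}$ is immediate from \eqref{u2d}-\eqref{v2d}, since the coefficients of $\mathbf{x}$ sum to $1$, likewise those of $\mathbf{y}$, and the $\hat{\mathbf{r}}$ terms cancel. For the Lorentzian norms I rely on two elementary identities: the $\langle\,,\,\rangle$-orthogonality $\langle\hat{\mathbf{r}},\mathbf{x}\rangle=\langle\hat{\mathbf{r}},\mathbf{y}\rangle=0$, which is a direct determinant expansion, and a Gram-type formula expressing $\langle\hat{\mathbf{r}},\hat{\mathbf{r}}\rangle$ in terms of $\langle\mathbf{x},\mathbf{x}\rangle$, $\langle\mathbf{y},\mathbf{y}\rangle$ and $\langle\mathbf{x},\mathbf{y}\rangle$. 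Armed with these, the expansions of $\langle\mathbf{u},\mathbf{u}\rangle$ and $\langle\mathbf{v},\mathbf{v}\rangle$ collapse, after substituting the explicit form of $\hat{k}$, to $\langle\mathbf{x},\mathbf{x}\rangle$ and $\langle\mathbf{y},\mathbf{y}\rangle$ respectively.

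Second, the Lax equation is checked by comparing both products as $2\times 2$ matrices polynomial of degree two in $\zeta$. The $\zeta^2$-coefficient is the identity on both sides; the $\zeta^1$-coefficients reproduce the sum invariant componentwise (already established); and the $\zeta^0$-coefficients produce four scalar constraints which, modulo the sum invariant, encode precisely the prescription \eqref{u2d}-\eqref{v2d} through the parameter $q=\alpha-\beta$. The Yang--Baxter property then follows from the standard refactorization criterion quoted in the text: one shows that $\mathbf{L}(\hat{\mathbf{x}},\alpha,\zeta)\mathbf{L}(\hat{\mathbf{y}},\beta,\zeta)\mathbf{L}(\hat{\mathbf{z}},\gamma,\zeta)=\mathbf{L}(\mathbf{x},\alpha,\zeta)\mathbf{L}(\mathbf{y},\beta,\zeta)\mathbf{L}(\mathbf{z},\gamma,\zeta)$ forces $\hat{\mathbf{x}}=\mathbf{x}$, $\hat{\mathbf{y}}=\mathbf{y}$, $\hat{\mathbf{z}}=\mathbf{z}$, by again reading off the triple sum from the $\zeta^1$-coefficient and exploiting the remaining independent $\zeta^0$-relations for generic parameters $\alpha,\beta,\gamma$.

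Third, quadrirationality follows by rearranging \eqref{u2d}-\eqref{v2d} in the same spirit as in Proposition \ref{thm1}: using $\mathbf{x}+\mathbf{y}=\mathbf{u}+\mathbf{v}$ to eliminate one variable, one can express $(\mathbf{u},\mathbf{y})$ rationally in terms of $(\mathbf{x},\mathbf{v})$ and, symmetrically, $(\mathbf{x},\mathbf{v})$ rationally in $(\mathbf{u},\mathbf{y})$, which inverts $\mathbf{x}\mapsto\mathbf{u}$ at fixed $\mathbf{y}$ and $\mathbf{y}\mapsto\mathbf{v}$ at fixed $\mathbf{x}$. The main obstacle is the constant-in-$\zeta$ part of the Lax step: showing that the cross-product-like entries in $\hat{\mathbf{r}}$ and the rational coefficients in $\hat{k}$ collapse the $\zeta^0$-terms of $\mathbf{L}(\mathbf{u},\alpha,\zeta)\mathbf{L}(\mathbf{v},\beta,\zeta)$ onto those of $\mathbf{L}(\mathbf{y},\beta,\zeta)\mathbf{L}(\mathbf{x},\alpha,\zeta)$ requires careful bookkeeping; in particular, the imaginary units in \eqref{Lax2D} are precisely what force the $+q^2$ shifts in the numerator and denominator of $\hat{k}$, and supply the $\alpha,\beta$ dependence that is absent in the head-on case.
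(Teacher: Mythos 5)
Your argument is correct in outline, but it takes a genuinely different route from the paper. The paper does not verify anything about $\mathbf{\hat{R}}_{\alpha,\beta}$ directly: it invokes the general matrix-refactorization map $\mathcal{R}_I:(X,Y)\mapsto(U,V)$ defined by $(U-\zeta I)(V-\zeta I)=(Y-\zeta I)(X-\zeta I)$ with $\det(U-\zeta I)=\det(X-\zeta I)$, whose quadrirationality, Yang--Baxter property, Lax matrix $X-\zeta I$ and trace/determinant invariants are already established in the cited works, and then obtains the theorem by parametrizing the matrix entries ($[X]_{11}=i(x_3+x_2)$, $[X]_{12}=x_0+x_1$, etc.) and restricting to the invariant level sets $\mathrm{Tr}(X)=-2i\alpha$, $\mathrm{Tr}(Y)=-2i\beta$; every claimed property is inherited under this reduction, and the determinant of the parametrized matrix is exactly $-\langle\mathbf{x},\mathbf{x}\rangle-x_3^2$, which is why the invariants \eqref{system2} drop out. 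Your direct verification is self-contained and makes the mechanism visible --- in particular the $\langle\,,\,\rangle$-isotropy of $\hat{\mathbf{r}}$ and the Lorentzian Lagrange identity for $\langle\hat{\mathbf{r}},\hat{\mathbf{r}}\rangle$ are exactly the right tools for the norm invariants, and the degree-by-degree comparison in $\zeta$ is sound (note only that in the triple product the componentwise sum sits in the $\zeta^2$-coefficient, not $\zeta^1$). What your route costs is the hardest step: the uniqueness of the trifactorization $\mathbf{L}(\hat{\mathbf{x}},\alpha,\zeta)\mathbf{L}(\hat{\mathbf{y}},\beta,\zeta)\mathbf{L}(\hat{\mathbf{z}},\gamma,\zeta)=\mathbf{L}(\mathbf{x},\alpha,\zeta)\mathbf{L}(\mathbf{y},\beta,\zeta)\mathbf{L}(\mathbf{z},\gamma,\zeta)\Rightarrow\hat{\mathbf{x}}=\mathbf{x},\ \hat{\mathbf{y}}=\mathbf{y},\ \hat{\mathbf{z}}=\mathbf{z}$, which you assert but do not establish; this is a nontrivial claim (twelve scalar relations for nine unknowns, with uniqueness only generic) and is precisely what the paper gets for free from the cited results on first-degree-polynomial Lax matrices. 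If you carry out your plan you should either prove that uniqueness carefully or cite it. The paper's reduction argument also buys something your computation does not: it identifies $\mathbf{\hat{R}}_{\alpha,\beta}$ as a restriction of a known Poisson Yang--Baxter map, which is what the subsequent section on the Sklyanin bracket relies on.
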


\

\begin{proof}
For every generic $2\times 2$ matrices $X$, $Y$, $U$ and $V$, the system (factorization system)  
$$(U-\zeta I) (V-\zeta I)=(Y-\zeta I)(X-\zeta I),$$ together with the equation 
$\det(U-\zeta I)=\det(X-\zeta I)$, where $I$ indicates the identity matrix, admits a non-trivial solution with respect to $U$ and $V$, that is 
$$U=(YX-\det(X) I)(Y+X-\mathrm{Tr}(X)\, I)^{-1}\;, \ V=Y+X-U\;. $$
The map  $\mathcal{R}_I:(X,Y)\mapsto (U,V)$ defined by this solution is a quadrirational Yang--Baxter map with Lax matrix $X-\zeta I$, and invariant conditions $\det(U)=\det(X)$, $\det(V)=\det(Y)$,  $\mathrm{Tr}(U)=\mathrm{Tr}(X)$ and $\mathrm{Tr}(V)=\mathrm{Tr}(Y)$. This fact is proved in \cite{kp1,kp3}.  

Now, if we express the generic elements of the matrix $X$ as 
 $$[X]_{11}=i(x_3+x_2), \ [X]_{12}=x_0+x_1, \ [X]_{21}=x_0-x_1, \ [X]_{22}=i(x_3-x_2)$$ and in a similar way the elements of the matrices $Y$, $U$ and $V$, 
then the map $\mathbf{\hat{R}}_{\alpha,\beta}$ constitutes the reduction of the map $\mathcal{R}_I$ to the invariant level sets 
\begin{equation} \label{symleavTr}
\{(x_1,x_2,x_3,x_4) \, | \, \mathrm{Tr}(X)=-2i \alpha\},  \ \{(y_1,y_2,y_3,y_4) \, | \, \mathrm{Tr}(Y)=-2i \beta\}, 
\end{equation}
by setting $x_3=u_3=-\alpha$ and $y_3=v_3=-\beta$. The same reduction at the Lax representation of $\mathcal{R}_I$ implies the Lax representation of $\mathbf{\hat{R}}_{\alpha,\beta}$ with Lax matrix \eqref{Lax2D}. Hence, the Yang--Baxter property, the Lax representation, the quadrirationality and the invariant conditions of $\mathbf{\hat{R}}_{\alpha,\beta}$  follow from the corresponding properties of the Yang--Baxter map $\mathcal{R}_I$ under this reduction.
\end{proof}

  { The parametric map $\mathbf{\hat{R}}_{\alpha,\beta}$ of theorem \ref{col2D} preserves the relativistic energy and momenta of the system of two colliding particles, as well as the rest masses $m_1$ and $m_2$. It is derived as the unique solution with respect to $\mathbf{u}$ and $\mathbf{v}$ of the factorization problem 
\begin{equation} \label{fac2d}
\mathbf{L}(\mathbf{u},\alpha,\zeta)\mathbf{L}(\mathbf{v},\beta,\zeta)=\mathbf{L}(\mathbf{x},\alpha,\zeta)\mathbf{L}(\mathbf{y},\beta,\zeta).
\end{equation}
In addition, from the Lax representation we can trace the extra parametric invariant condition 
\begin{equation*}
\alpha u_0+\beta v_0+(v_1 u_2-u_1 v_2)=\alpha x_0+\beta y_0+(x_1y_2-y_1x_2)\;,
\end{equation*}
or equivalently
\begin{equation} \label{R2inv}
\frac{\det(\mathbf{p}_u,\mathbf{p}_v)+\det(\mathbf{p}_x,\mathbf{p}_y)}{u_0-x_0}=q\;,
\end{equation}
for $q=\alpha-\beta$. 
On the other hand, we can derive directly \eqref{u2d}-\eqref{v2d} from the unique solution of the system of \eqref{system2} and 
\eqref{R2inv}, i.e.  \eqref{fac2d} is equivalent to \eqref{system2},\eqref{R2inv}. 
Hence, $\mathbf{\hat{R}}_{\alpha,\beta}$ represents two-dimensional relativistic collisions which satisfy the 
parametric invariant condition \eqref{R2inv}. }

The actual values of $\mathbf{\hat{R}}_{\alpha,\beta}$ depend only on the difference $q$ of the Yang--Baxter parameters $\alpha$ and $\beta$. The parameter $q$ can be related to the extra information which is required to fully determine planar collisions. For example, the scattering angle $\theta_L$  of the first particle with respect to the first axis is expressed  with respect to the parameter $q$ and the initial conditions $\mathbf{x}$, $\mathbf{y}$ by the equation 
\begin{align} \label{tan1}
\tan{\theta_L} =\frac{u_2} 
{u_1} 
=\frac{(|\mathbf{x}|^2-|\mathbf{y}|^2+q^2)x_2+2(|\mathbf{x}|^2+\langle\mathbf{x},\mathbf{y}\rangle)y_2+2q(x_1 y_0-x_0 y_1)}{(|\mathbf{x}|^2-|\mathbf{y}|^2+q^2)x_1+2(|\mathbf{x}|^2+\langle\mathbf{x},\mathbf{y}\rangle)y_1+2q(x_0 y_2-x_2 y_0)}\;,
\end{align}
where here $|\mathbf{x}|^2=\langle\mathbf{x},\mathbf{x}\rangle$ and $|\mathbf{y}|^2=\langle\mathbf{y},\mathbf{y}\rangle.$ 

We will further investigate the role of the parameters of the map $\mathbf{\hat{R}}_{\alpha,\beta}$ in a special case. So, we consider 
a frame (lab frame) in which the second particle is at rest and the velocity of the first particle is directed along the first axis, that is $y_1=y_2=x_2=0$ and $x_1>0$. In this case, 
taking into account the energy-momentum relation for the two particles, $|\mathbf{x}|^2=m_1^2 c^2$ and $|\mathbf{y}|^2=m_2^2 c^2$, \eqref{tan1} becomes 
\begin{align} \label{tan2}
\tan{\theta_L} =
\frac{2q y_0}{|\mathbf{x}|^2-|\mathbf{y}|^2+q^2}=\frac{2q m_2 c}{m_1^2 c^2-m_2^2 c^2+q^2}\;.
\end{align}
In addition, we can express the scattering angle $\theta_L$ in the lab frame with respect to the scattering angle $\theta$ in the center-of-momentum frame, where the total momentum of the system is zero, by the formula 
\begin{align} \label{tanfr}
\tan{\theta_L}=\frac{m_2 c^2 \sin{\theta}}{E_x'+E_y' \cos{\theta}}=
\frac{2 c^2 m_2 \tan{\frac{\theta}{2}}}{E_x'+E_y'+(E_x'-E_y')\tan^2{\frac{\theta}{2}}}
\;,
\end{align}
where here $E'_x$ and $E'_y$ denote the relativistic energies of the two particles before collision with respect to the center-of-momentum frame (so, $E_x'^2-E_y'^2=c^4(m_1^2-m_2^2)$\,). Comparing \eqref{tan2} and \eqref{tanfr}, we derive that 
\begin{equation} \label{qpar}
q=\frac{1}{c}(E'_x-E'_y)\tan{\frac{\theta}{2}}\;.
\end{equation}
The particles' after collision energies $E'_u$ and $E'_v$ with respect to the center of momentum frame are equal to $E'_x$ and $E'_y$ respectively. Thus, according to \eqref{qpar}, a natural choice for the Yang--Baxter parameters $\alpha,\beta$ in this case is 
\begin{equation} \label{abpar}
\alpha=\frac{E'_x}{c}\tan{\frac{\theta}{2}}=\frac{E'_u}{c}\tan{\frac{\theta}{2}}\;, \ 
\beta=\frac{E'_y}{c}\tan{\frac{\theta}{2}}=\frac{E'_v}{c}\tan{\frac{\theta}{2}}\;.
\end{equation}
Alternatively, if we denote by $W$ the invariant mass of the system of the particles defined by 
\begin{equation*}
 W^2c^2=\langle\mathbf{x}+\mathbf{y},\mathbf{x}+\mathbf{y}\rangle=\langle\mathbf{u}+\mathbf{v},\mathbf{u}+\mathbf{v}\rangle=
\frac{(E'_x+E'_y)^2}{c^2}\;,
\end{equation*}
then from \eqref{qpar} we can express the parameter $q$ in terms of $\theta$, $W$, $m_1$ and $m_2$, as  
\begin{equation*}
q=\frac{c}{W} (m_1^2-m_2^2)\tan{\frac{\theta}{2}}\;,    
\end{equation*}
and consider 
$$\alpha= \frac{cm_1^2}{W}\tan{\frac{\theta}{2}}\;, \ \beta= \frac{cm_2^2}{W}\tan{\frac{\theta}{2}}\;.$$

\subsection{Poisson structure, reductions and transfer dynamics } 

The construction of the Lax matrix \eqref{Lax2D}, as appears in the proof of theorem \ref{col2D}, suggests that 
it admits a compatible $r$-matrix Poisson structure, that is the Sklyanin bracket 
\cite{Skl1,Skl2}.
Indeed, the equation 
\begin{equation} \label{sklbr}
\{\mathbf{L}(\mathbf{x},  {\alpha},i \zeta ) \ \overset{\otimes }{,} \
\mathbf{L}(\mathbf{x},  {\alpha},i h)\}=[\frac{  {{P}}}{\zeta
-\eta},\mathbf{L}(\mathbf{x},  {\alpha},i \zeta )\otimes
\mathbf{L}(\mathbf{x},  {\alpha},i h)]\;,
\end{equation}
where $P$ denotes the permutation
operator,  $  {{P}}(x\otimes y) = y\otimes x$ and 
$\mathbf{L}$ the Lax matrix \eqref{Lax2D}, 
is equivalent to 
\begin{equation} \label{sklbr2}
\{x_0,x_1\}=x_2\;, \ \{x_0,x_2\}=-x_1\;, \  \{x_1,x_2\}=-x_0\;.
\end{equation}
  {
We can extend the Sklyanin bracket on $\mathcal{L} \times \mathcal{L}$, for $\mathcal{L}=\{\mathbf{L}(\mathbf{x},\alpha,\zeta )| \mathbf{x}\in \mathbb{R}^3,\alpha \in \mathbb{R} \}$
 by considering 
\begin{align*}
\{\mathbf{L}(\mathbf{x},{\alpha},i \zeta ) \ \overset{\otimes }{,} \
\mathbf{L}(\mathbf{x},{\alpha},i h)\}=[\frac{P}{\zeta
-\eta},\mathbf{L}(\mathbf{x},{\alpha},i \zeta )\otimes
\mathbf{L}(\mathbf{x},{\alpha},i h)]\;, \\
\{\mathbf{L}(\mathbf{y},{\beta},i \zeta ) \ \overset{\otimes }{,} \
\mathbf{L}(\mathbf{y},{\beta},i h)\}=[\frac{P}{\zeta
-\eta},\mathbf{L}(\mathbf{y},{\beta},i \zeta )\otimes
\mathbf{L}(\mathbf{y},{\beta},i h)]\;,
\end{align*}
and $\{\mathbf{L}(\mathbf{x},{\alpha},i \zeta ) \ \overset{\otimes }{,} \
\mathbf{L}(\mathbf{y},{\beta},i h)\}=0$, which corresponds to the Poisson tensor  
\begin{align*} \label{extPois1}
\pi_{(\mathbf{x},\mathbf{y})}&={x_{2}} \frac{\partial} {\partial x_{0}}\wedge \frac{\partial} {\partial x_{1}} -{x_{1}} \frac{\partial} {\partial x_{0}}\wedge \frac{\partial} {\partial x_{2}}-{x_{0}} \frac{\partial} {\partial x_{1}}\wedge \frac{\partial} {\partial x_{2}} \\
&+{y_{2}} \frac{\partial} {\partial y_{0}}\wedge \frac{\partial} {\partial y_{1}}-{y_{1}} \frac{\partial} {\partial y_{0}}\wedge \frac{\partial} {\partial y_{2}} -{y_{0}} \frac{\partial} {\partial y_{1}}\wedge \frac{\partial} {\partial y_{2}}\;
\end{align*}
on $\mathbb{R}^3 \times \mathbb{R}^3$. Now, we can show directly that the Yang--Baxter map $\mathbf{\hat{R}}_{\alpha,\beta}$ of theorem \ref{col2D} is Poisson with respect to $\pi_{(x,y)}$.}
This follows from the fact that $\mathbf{\hat{R}}_{\alpha,\beta}$\footnote{The Lax matrix of $\mathbf{\hat{R}}_{\alpha,\beta}$ belongs to the case I of the classification in \cite{kp3}} is a reduction on the Poisson submanifolds defined by the invariant level sets \eqref{symleavTr}, of the more general Poisson Yang--Baxter map $\mathcal{R}_I$ (see proof of theorem \ref{col2D}) with respect to the Sklyanin bracket \cite{kp1,kp3}. 

The Poisson structure \eqref{sklbr2} is of rank two and it admits the Casimir function
$$\mathcal{C}({\mathbf{x}})=\langle\mathbf{x},\mathbf{x}\rangle=x_0^2-x_1^2-x_2^2\;.$$
Correspondingly, the extended (rank-four) Poisson structure $\pi_{(\mathbf{x},\mathbf{y})}$ admits the Casimirs  
$\mathcal{C}({\mathbf{x}})$ and $\mathcal{C}({\mathbf{y}})$. Since the Yang--Baxter map $\mathbf{\hat{R}}_{\alpha,\beta}$ preserves the Casimirs, we can further reduce it to a symplectic Yang--Baxter map (with the masses as extra Yang--Baxter parameters) on the four-dimensional invariant symplectic leaves 
of $\pi_{(\mathbf{x},\mathbf{y})}$ defined by the connected components of 
$$\mathcal{S}_{m_1,m_2}=\{(\mathbf{x},\mathbf{y}) \, | \, \mathcal{C}({\mathbf{x}})=m_1^2c^2, \ \mathcal{C}({\mathbf{y}})=m_2^2c^2 \}\;.$$
However, the resulting map under this reduction is not rational.   

The head-on collision Yang--Baxter map \eqref{MEmap} is derived by reduction of $\mathbf{\hat{R}}_{\alpha,\beta}$ for $q=0$, i.e. $\alpha=\beta$, on 
an invariant manifold. Particularly, from \eqref{u2d}-\eqref{v2d} we observe  that if we set $x_2=y_2=0$, for $q=0$,  then we obtain $u_2=v_2=0$, which shows that $$\mathcal{M}= \{ (x_0,x_1,0,y_0,y_1,0) : x_i,y_i \in \mathbb{R} \}$$ is an invariant manifold of $\mathbf{\hat{R}}_{\alpha,\beta}$. The reduced map on $\mathcal{M}$  coincides with the head-on momentum-energy Yang--Baxter map \eqref{MEmap}.

The dynamical behaviour of a plain Yang--Baxter map is usually rather trivial. For example all the maps of the classification 
in \cite{ABS2,papclas} are involutions and the same is true for the maps \eqref{colYB}, \eqref{YBab} and \eqref{MEmap} (but not for the higher-dimensional Yang--Baxter map of theorem \ref{col2D}). Nevertheless, for any Yang--Baxter map various families of multidimensional maps, usually referred as {\it{transfer maps}}, can be generated that exhibit highly non-trivial behaviour. In \cite{ves2,ves3}, Veselov introduced an hierarchy of commuting transfer maps which preserve the spectrum of their monodromy matrix. The transfer maps of the collision Yang--Baxter maps represent particular sequences of $n$ colliding particles and their commutativity reflects the fact that the resulting  momentum-energy vectors are independent of the ordering of the collisions.

Here, we will focus on a variant of Veselov's transfer maps associated with periodic staircase initial value problems of integrable lattice equations \cite{PNC,QCPN} (see also \cite{kp2,Koul} for the case of Yang--Baxter maps). In this framework, we define the  transfer map of a Yang--Baxter map $R_{\alpha,\beta}$, as the  map  $$T_n:(x_1,x_2, \dots, x_n,y_1,y_2 \dots, y_n) \mapsto (x_1',x_2' \dots, x_n',y_2', y_3'\dots, y_n',y_1'),$$
where $(x_i',y_i')=R_{\alpha,\beta}(x_i,y_i)$, and the {\it $k$-transfer map} as the map 
$T_n^k:=\underbrace{T_n\circ \dots \circ T_n}_k$. We also define the {\it{monodromy matrix}} $M_n$ of $T_n$, with   
$$M_n(x_1,\dots,x_n,y_1, \dots, y_n)= \prod_{i=0}^{n-1}L(y_{n-i},\beta,\zeta)L(x_{n-i},\alpha,\zeta)\;.$$ 
From the definition of the monodromy matrix and the Lax representation \eqref{laxmat}, it follows that 
\begin{equation} \label{monfac}
L(y_1',\beta,\zeta) M_n(x_1,\dots,x_n,y_1, \dots, y_n)=M_n(T_n(x_1,\dots,x_n,y_1, \dots, y_n)) L(y_1',\beta,\zeta)\;.
\end{equation}
Hence, we can derive integrals of any transfer map from the spectrum of the 
monodromy matrix.
Similarly, we can show that in the more general case of (non-autonomous) transfer maps including different parameters $\alpha_i, \beta_i$,  where 
$(x_i',y_j'):=R_{\alpha_i,\beta_j}(x_i,y_j)$, the $n$-transfer map $T_n^n$ preserves the spectrum of the corresponding monodromy matrix.  

Let us denote by $\mathbf{T}_n$ the $6n$-dimensional transfer map of the Yang--Baxter map $\mathbf{\hat{R}}_{\alpha,\beta}$, and by $\mathbf{M}(\zeta)=\mathbf{M}_n(\mathbf{x}_1,\dots,\mathbf{x}_n,\mathbf{y}_1, \dots, \mathbf{y}_n)$ and 
$\mathbf{M'}(\zeta)=\mathbf{M}_n(\mathbf{T}_n(\mathbf{x}_1,\dots,\mathbf{x}_n,\mathbf{y}_1, \dots, \mathbf{y}_n))$ the corresponding monodromy matrices. 
By construction, we obtain three linear integrals of $\mathbf{T}_n$ (associated with the invariant condition $\mathbf{u}+\mathbf{v}=\mathbf{x}+\mathbf{y}$ of $\mathbf{\hat{R}}_{\alpha,\beta}$),
$$I_k=\sum_{i=1}^n(x_i+y_i)\;, \ k=0,1,2\;,$$
which represent the conservation of energy and momentum. More integrals are obtained from the spectrum of $\mathbf{M}(\zeta)$.

The comultiplication property of the Sklyanin bracket (see e.g. \cite{sklyBack,Tsigan} for the Sklyanin bracket with regard  to the Heisenberg magnetic chain) implies that 

\begin{equation*} 
\{\mathbf{M}(i \zeta ) \ \overset{\otimes }{,} \
\mathbf{M}(i h)\}=[\frac{{{P}}}{\zeta
-\eta},\mathbf{M}(i \zeta )\otimes
\mathbf{M}(i h)]\;,
\end{equation*}
and from \eqref{monfac} we derive that 
$\{\mathbf{M}'(i \zeta ) \ \overset{\otimes }{,} \
\mathbf{M}'(i h)\}=[\frac{  {{P}}}{\zeta
-\eta},\mathbf{M}'(i \zeta )\otimes
\mathbf{M}'(i h)]\;,$
which shows that $\mathbf{T}_n$ is a Poisson map with respect to 
the (extended) Sklyanin bracket on $\mathbb{R}^{6n}$,  
$$\pi_{(\mathbf{x}_1,\mathbf{y}_1)}+\pi_{(\mathbf{x}_2,\mathbf{y}_2)}+\dots +\pi_{(\mathbf{x}_n,\mathbf{y}_n)}\;.$$
Finally, the Sklyanin bracket ensures that the integrals obtained from the spectrum of  $\mathbf{M}(\zeta)$ are in involution
 \cite{BabVi}. 
We summarise all these results in the following proposition.
\begin{proposition}
The transfer map $\mathbf{T}_n$ is Poisson with respect to the Sklyanin bracket and preserves the spectrum of the monodromy matrix  $\mathbf{M}(\zeta)$, the corresponding $2n$ Casimirs $\mathcal{C}({\mathbf{x}_i})$, $\mathcal{C}({\mathbf{y}_i})$ and the three linear integrals $I_k$. Furthermore, $\{tr\mathbf{M}(\zeta),tr\mathbf{M}(\eta)\}=0$.    
\end{proposition}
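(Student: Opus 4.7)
The plan is to package the ingredients that the preceding paragraphs have already assembled. Each claim in the proposition corresponds to a specific piece that has been prepared above, so the proof really amounts to putting them into a clean order rather than discovering anything new.

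First I would dispatch the easy conservation laws. The invariant conditions \eqref{system2} for $\mathbf{\hat{R}}_{\alpha,\beta}$ give $\langle\mathbf{u},\mathbf{u}\rangle=\langle\mathbf{x},\mathbf{x}\rangle$ and $\langle\mathbf{v},\mathbf{v}\rangle=\langle\mathbf{y},\mathbf{y}\rangle$ on each factor; composing with the cyclic shift $(\mathbf{y}_1',\dots,\mathbf{y}_n')\mapsto(\mathbf{y}_2',\dots,\mathbf{y}_n',\mathbf{y}_1')$ only permutes the $2n$ Casimirs $\mathcal{C}(\mathbf{x}_i),\mathcal{C}(\mathbf{y}_i)$, so each of them is preserved by $\mathbf{T}_n$. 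Summing the invariant relation $\mathbf{u}+\mathbf{v}=\mathbf{x}+\mathbf{y}$ over $i$ immediately yields the invariance of the three linear integrals $I_k=\sum_i(x_i^k+y_i^k)$, $k=0,1,2$.

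Next I would address the spectrum of the monodromy matrix. Iterating the Lax equation \eqref{monfac} along the chain produces a telescoping conjugation of the form $\mathbf{M}'(\zeta)=G(\zeta)\mathbf{M}(\zeta)G(\zeta)^{-1}$, where $G(\zeta)$ is built from the $L$-factors that drop off the right end of the product and are re-inserted at the left. For generic data this similarity shows that $\mathbf{M}'(\zeta)$ and $\mathbf{M}(\zeta)$ share their spectrum, hence the coefficients of $\det(\lambda I-\mathbf{M}(\zeta))$ are integrals of $\mathbf{T}_n$.

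For the Poisson property I would use the already established fact that $\mathbf{\hat{R}}_{\alpha,\beta}$ is Poisson with respect to the Sklyanin bracket on each pair $(\mathbf{x}_i,\mathbf{y}_i)$; since the extended bracket is the direct sum $\pi_{(\mathbf{x}_1,\mathbf{y}_1)}+\cdots+\pi_{(\mathbf{x}_n,\mathbf{y}_n)}$ and the cyclic relabeling is manifestly Poisson, the composition $\mathbf{T}_n$ is Poisson. Finally, the involution $\{\operatorname{tr}\mathbf{M}(\zeta),\operatorname{tr}\mathbf{M}(\eta)\}=0$ is obtained by taking the trace of both sides of the comultiplicated Sklyanin relation displayed immediately before the proposition: the right-hand side is the trace of a commutator, which vanishes, so the spectral integrals are pairwise in involution, as needed to combine Liouville integrability with the invariants identified above.

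I do not foresee a genuine obstacle; the only point that deserves care is checking that the cyclic shift in the definition of $\mathbf{T}_n$ does not break the Poisson structure or the similarity relation for $\mathbf{M}$. Both follow from the symmetry of the extended bracket under relabeling of the factors and from the cyclic invariance of $\operatorname{tr}\mathbf{M}(\zeta)$, so the argument closes cleanly.
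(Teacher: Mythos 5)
Your proposal is correct, and for most of the claims it reproduces the argument the paper assembles in the paragraphs preceding the proposition: the linear integrals $I_k$ and the Casimirs follow from the invariant conditions \eqref{system2} together with the observation that the cyclic shift only permutes the $\mathbf{y}$-slots; the spectrum is preserved because \eqref{monfac} exhibits $\mathbf{M}'(\zeta)$ as the conjugate $L(\mathbf{y}_1',\beta,\zeta)\,\mathbf{M}(\zeta)\,L(\mathbf{y}_1',\beta,\zeta)^{-1}$ (note the conjugating factor is this single matrix, not a longer telescoped product, though your description is the correct derivation of that identity); and the involutivity of the traces is obtained by applying $\mathrm{tr}\otimes\mathrm{tr}$ to the quadratic $r$-matrix relation for $\mathbf{M}$, whose right-hand side is a commutator. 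The one point where you genuinely diverge is the Poisson property: you deduce it by writing $\mathbf{T}_n$ as the simultaneous application of the local maps $\mathbf{\hat{R}}_{\alpha,\beta}$ on disjoint pairs --- each already shown to be Poisson for $\pi_{(\mathbf{x}_i,\mathbf{y}_i)}$ --- followed by a permutation of identical direct summands, whereas the paper argues through the monodromy matrix, using the comultiplication property of the Sklyanin bracket and \eqref{monfac} to show that $\mathbf{M}'$ satisfies the same relation as $\mathbf{M}$. Your route is more elementary and self-contained (the paper's inference from ``$\mathbf{M}'$ satisfies the Sklyanin relation'' to ``$\mathbf{T}_n$ is Poisson'' tacitly assumes the entries of $\mathbf{M}$ generate enough of the function algebra), while the paper's route simultaneously delivers the $r$-matrix relation for $\mathbf{M}$ that you in any case invoke for the involutivity. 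One small caveat applying equally to both versions: the individual $\mathcal{C}(\mathbf{y}_i)$ are preserved only up to the cyclic permutation, so strictly they are invariants of $\mathbf{T}_n$ as a set (and individually of $\mathbf{T}_n^{\,n}$).
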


Similar results hold also for Veselov's transfer maps of $\mathbf{\hat{R}}_{\alpha,\beta}$. In order to complete a proof of the Liouville integrability of the transfer maps we need to show that the spectrum of the monodromy matrix generates enough functional independent integrals. In the future we aim to investigate in detail the Liouville integrability of several types of collision transfer maps.

\subsection{The Lattice equation associated with 2D collisions }  

As in the case of head-on collisions, we can derive a $3$D-consistent system of lattice equations associated with the Yang--Baxter map $\mathbf{\hat{R}}_{\alpha,\beta}$ of theorem \ref{col2D}. Here, we  apply proposition \ref{YBto3D} on the abelian group $(\mathbb{R}^3,+)$  by considering the 
invariant condition $\mathbf{u}+\mathbf{v}=\mathbf{x}+\mathbf{y}$, for $\mathbf{u}$ and $\mathbf{v}$ given by \eqref{u2d} and \eqref{v2d} respectively.   
In this case, equation \eqref{Shibf} for the vector vertex variables $\mathbf{w}_{n,m}=(f_{n,m},g_{n,m},h_{n,m})$ implies the  
$3$D-consistent system 
\begin{align} \label{system23D}
\mathbf{w}_{n+1,m}-\mathbf{w}_{n,m+1}&=\hat{k}(\mathbf{w}_{n+1,m}-\mathbf{w}_{n,m},\mathbf{w}_{n+1,m+1}-\mathbf{w}_{n+1,m},q)(\mathbf{w}_{n+1,m}-\mathbf{w}_{n,m}) \nonumber \\ 
&-\hat{k}(\mathbf{w}_{n+1,m+1}-\mathbf{w}_{n+1,m},\mathbf{w}_{n+1,m}-\mathbf{w}_{n,m},q)
(\mathbf{w}_{n+1,m+1}-\mathbf{w}_{n+1,m}) \nonumber \\
&+ 2 q  \hat{\mathbf{r}}(\mathbf{w}_{n+1,m}-\mathbf{w}_{n,m},\mathbf{w}_{n+1,m+1}-\mathbf{w}_{n+1,m},q)  \;. 
\end{align}
Similarly to the head-on collision system \eqref{system3D}, we can derive system \eqref{system23D} directly from  \eqref{u2d}-\eqref{v2d} by setting 
$\mathbf{x}=\mathbf{w}_{n+1,m}-\mathbf{w}_{n,m}$,  
$\mathbf{y}=\mathbf{w}_{n+1,m+1}-\mathbf{w}_{n+1,m}$,  
$\mathbf{u}=\mathbf{w}_{n+1,m+1}-\mathbf{w}_{n,m+1}$ and  
$\mathbf{v}=\mathbf{w}_{n,m+1}-\mathbf{w}_{n,m}$. 

System \eqref{system23D} is uniquely solvable with respect to any one of the arguments $\mathbf{w}_{n,m}$, $\mathbf{w}_{n+1,m}$, 
$\mathbf{w}_{n,m+1}$, $\mathbf{w}_{n+1,m+1}$ and 3D consistent according to proposition 2.1.
Furthermore, it admits the Lax representation
\begin{align*}
&\mathbf{L}(\mathbf{w}_{n+1,m+1}-\mathbf{w}_{n,m+1},\alpha,\zeta)\mathbf{L}(\mathbf{w}_{n,m+1}-\mathbf{w}_{n,m},\beta,\zeta)\\
&=\mathbf{L}(\mathbf{w}_{n+1,m+1}-\mathbf{w}_{n+1,m},\beta,\zeta)\mathbf{L}(\mathbf{w}_{n+1,m}-\mathbf{w}_{n,m},\alpha,\zeta)\;.
\end{align*}
where $\mathbf{L}(\mathbf{w}_{n+1,m+1}-\mathbf{w}_{n,m+1},\alpha,\zeta)$ is the Lax matrix \eqref{Lax2D}. 
The system \eqref{system23D} is reduced to \eqref{system3D} for $q=0$ and $h_{n,m}=0$, for all $n,m \in \mathbb{Z}$. 

In some cases compatible Poisson structures of Yang--Baxter maps of particular form give rise to compatible Poisson structures for periodic reductions of the corresponding 3D-consistent lattice equations \cite{KT}. However, this connection is not completely clear yet and a straightforward implementation of the results in \cite{KT} do not apply here. 
We will defer this problem for a future work in which we intend to study  
invariant Poisson structures for the periodic reductions of relativistic collision quadrilateral equations.

\section{Conclusion} \label{SecConc}
In this paper, we presented discrete integrable systems, namely Yang--Baxter maps and 3D consistent systems of lattice equations, associated with 
elastic relativistic particle collisions. Our approach was based on the momentum-energy transformation which generalises the results of \cite{Koul} and implies quadrirational Yang--Baxter maps, with respect to the original energy and momentum variables, as well as affine linear quadrilateral equations. Parametric generalisations of these systems, regarding planar relativistic collisions, were also presented, along with an $r$-matrix formalism suitable to study the Liouville integrability of the transfer maps. A complete proof of the Liouville integrability, including the investigation of exact solutions of  transfer maps and of plane wave reductions of the induced lattice systems require further study. 

The transfer maps of the Yang--Baxter maps and the periodic reductions 
of the quadrilateral equations correspond to periodic boundary conditions on two-dimensional lattices. Nevertheless, a similar approach to \cite{CCZ,CZ} can be considered to study reflection maps and fixed boundary value problems for the discrete collision systems. Furthermore, various non-commutative and anti-commutative solutions of the Yang--Baxter equation appear in literature  \cite{AdamPap,ABS2,Doli,KasKoul,sokkoul}. It would be  interesting to examine the existence of non-commutative analogues of the collision Yang--Baxter maps and the corresponding 3D consistent equations. 

 { Proceeding to higher-dimensional collision problems requires extra information regarding the colliding particles  and equivalent parametric invariant conditions. The existence of higher-dimensional refactorization problems which reproduce such conditions, in addition to  the conservation of momentum-energy, and result in quadrirational maps and affine linear systems of  equations is left for future research. It is expected that higher-dimensional systems representing elastic collisions will be integrable as well.
However, the interpretation from a physics point of view of standard integrability features (e.g. soliton/breather solutions, symmetries and conservation laws) of discrete collision systems and their continuous counterparts, with regard to chains  of relativistic colliding particles, is not yet very clear to the author and deserves further investigation.}


\begin{thebibliography}{3}
 {\bibitem{Adler} 
Adler V E 1995 Integrable deformations of a polygon {\it
Physica D} \textbf{87} 52--57.}

\bibitem{AdamPap} 
Adamopoulou P, Konstantinou-Rizos S, Papamikos G 2021
Integrable extensions of the Adler map via Grassmann algebras 
{\it Theor. Math. Phys.} \textbf{207} 553--559.

\bibitem{ABS1}
Adler V E, Bobenko A I, Suris Yu B 2003 Classification of
integrable equations on quad-graphs. The consistency approach {\it
Comm. Math. Phys.} \textbf{233} 513--543.

\bibitem{ABS2}
Adler V E, Bobenko A I, Suris Yu B 2004 
Geometry of Yang-Baxter maps: pencils of conics and quadrirational mappings {\it Comm. Anal. Geom.}  {\bf 12} 967--1007.

\bibitem{AdYa}
Adler V E, Yamilov R I 1994 Explicit auto-transformations of integrable chains {\it{ J. Phys.A: Math. Gen.}} {\bf 27} 477--492.

\bibitem{atk}
Atkinson J 2009 Linear quadrilateral lattice equations and multidimensional consistency {\it{ J. Phys. A: Math. Theor.}} {\bf 42} 454005.

\bibitem{Baxter} 
Baxter R 1972 Partition function of the eight-vertex lattice model 
\emph{Ann. Physics} {\bf70} 193--228.

\bibitem{BabVi}
Babelon O, Viallet C M 1990 Hamiltonian structures and Lax equations {\emph Phys. Lett.
B} {\bf 237} 411--416.


\bibitem{bobsur}
Bobenko A I, Suris Yu B 2002 Integrable systems on quad-graphs \emph{Int. Math. Res.
Notices} {\bf 11} 573--611.


\bibitem{buch}
Buchstaber V 1998 The Yang-Baxter transformation {\it Russ. Math. Surveys} {\bf 53:6} 1343--1345.

\bibitem{buchIg}
Buchstaber V M, Igonin S, Konstantinou-Rizos S, Preobrazhenskaia M M 2020 Yang--Baxter maps, Darboux transformations, and linear approximations of refactorisation problems {\emph{J. Phys. A: Math. Theor.}} {\bf{53}} 504002.

\bibitem{CCZ}
Caudrelier V, Cramp{\'e} N, Zhang Q C 2014 
Integrable boundary for quad-graph systems: Three-dimensional boundary consistency 
{\it SIGMA} {\bf 10} 014 24pp.

\bibitem{CZ}
Caudrelier V, Zhang Q C 2014 Yang--Baxter and reflection maps from vector solitons with a boundary 
{\it Nonlinearity} {\bf 27} 1081--1103.

\bibitem{DimHos}
Dimakis A, Müller-Hoissen F 2019 Matrix KP: tropical limit and Yang--Baxter maps {\emph{Lett. Math. Phys.}} {\bf{109}} 799--827.

\bibitem{Doli}
Doliwa A 2014 Non-commutative rational Yang--Baxter maps {\it Lett. Math. Phys.} {\bf 104} 299--309. 

\bibitem{Drin}
Drinfeld V 1992 On some unsolved problems in quantum group theory
\emph{Lecture Notes in Math.}  {\bf1510} 1--8.



\bibitem{KasKoul}
Kassotakis P, Kouloukas T  2022 On non-abelian quadrirational Yang–Baxter maps {\it{ J. Phys. A: Math. Theor.}} {\bf 55} 175203.

\bibitem{KasNi}
Kassotakis P, Nieszporski M 2018 Difference systems in bond and face variables and non-potential versions of discrete integrable systems {\emph{J.Phys.A:Math.Theor.}}{\bf51} 385203.

 { \bibitem{KasNiPT}
Kassotakis P, Nieszporski M, Papageorgiou V, Tongas A 2020 Integrable two-component systems of difference equations {\emph Proc. R. Soc. A}  {\bf476} 20190668.}


\bibitem{sokor} Konstantinou-Rizos S, Mikhailov A V 2013 Darboux transformations, 
finite reduction groups and related Yang-Baxter maps {\it J. Phys. A: Math. Theor.}  {\bf 46}  425201.

\bibitem{sokkoul}
 Konstantinou-Rizos S, Kouloukas T E 2018 
A noncommutative discrete potential KdV lift {\emph{J. Math. Phys.}}  {\bf 59} 063506.

\bibitem{Koul}
Kouloukas T E 2017 Relativistic collisions as Yang--Baxter maps
{\it Phys.Lett. A} {\bf381}  3445--3449.

\bibitem{kp1}
Kouloukas T E, Papageorgiou V G  2009 Yang--Baxter maps with first-degree-polynomial $2 \times 2$ Lax matrices
{\it J. Phys. A: Math. Theor.}  {\bf 42}  404012.

\bibitem{kp2}
 Kouloukas T E,  Papageorgiou V G 2011 Entwining
Yang-Baxter maps and integrable lattices {\it Banach Center Publ.}  {\bf 93}
163--175.

\bibitem{kp3} Kouloukas T E,  Papageorgiou V G  2011 Poisson
Yang-Baxter maps with binomial Lax matrices \emph{J. Math. Phys.}  {\bf 52} 073502.

\bibitem{kp4} Kouloukas T E,  Papageorgiou V G  2012 
3D compatible ternary systems and Yang--Baxter maps {\it J. Phys. A: Math. Theor.}  {\bf 45}  345204.

\bibitem{KT}
 Kouloukas T E,  Tran D 2015 Poisson structures for lifts and periodic reductions of integrable
lattice equations {\it J. Phys. A: Math. Theor.} {\bf 4}8 075202.

\bibitem{MWX}
Mikhailov A V, Wang J P,  Xenitidis P 2011 Recursion operators, conservation laws, and integrability conditions for difference equations {\it Theor. Math. Phys.} {\bf 167}, 421--443.

\bibitem{Ni}
Nijhoff F W 2002 Lax pair for the Adler (lattice Krichever-Novikov) system {\emph {Phys.Lett. A}} {\bf{297}} 49--58. 

\bibitem{NQC}
Nijhoff F W, Quispel G R W, Capel H W 1983 Direct linearization of nonlinear difference-difference equations 
\emph{ Phys. Lett. A}  {\bf 97} 125--128.

\bibitem{NW}
 Nijhoff F W, Walker A J 2001 The discrete and continuous Painlev\`{e} hierarchy and
the Garnier systems {\emph {Glasgow Mathematical Journal}} {\bf 43 A} 109--123. 



\bibitem{PNC}
 Papageorgiou V G, Nijhoff F W,  Capel H W 1990 Integrable mappings
and nonlinear integrable lattice equations, {\it {Phys. Lett. A}}  {\bf 147}
106--114.


\bibitem{paptong}
Papageorgiou V G, Tongas A G 2007 \emph{Yang-Baxter maps and multi-field integrable lattice equations} {\it J. Phys. A: Math. Theor.} {\bf{40}} 12677.




\bibitem{paptonves}
Papageorgiou V G, Tongas A G and Veselov AP 2006 Yang-Baxter maps and symmetries of integrable equations on quad-graphs {\it J. Math. Phys.}
{\bf 47} 083502.

\bibitem{papclas}
Papageorgiou V G, Suris Yu B, Tongas A G, Veselov A P 2010 On Quadrirational Yang-Baxter Maps {\it SIGMA} {\bf 6} 033 9pp. 

\bibitem{QCPN}
Quispel G R W, Capel H W,  Papageorgiou V G,  Nijhoff F W 1991
Integrable mappings derived from soliton equations,{\it{ Physica A }} {\bf 173}
 243--266.

\bibitem{shib}
Shibukawa Y 2007 Dynamical Yang-baxter maps with an invariance condition 
\emph{Publ. Res. Inst. Math. Sci.} {\bf 43}, No 4, 1157--1182

\bibitem{Skl1}
Sklyanin E K 1982 Some algebraic structures connected with the
Yang-Baxter equation {\it Funct. Anal. Appl.} {\bf 16}, No 4,  263--270.

\bibitem{Skl2}
Sklyanin E K 1985 The Goryachev-Chaplygin top and the method of
the inverse scattering problem {\it Journal of Soviet Mathematics} {\bf 31}, No 6,  3417--3431.

\bibitem{skly88}
Sklyanin E K  1988 Classical limits of SU(2)-invariant solutions of the Yang-Baxter equation {\it J. Soviet Math.} {\bf 40}, No 1, 93--107.

\bibitem{sklyBack}
 Sklyanin E K 2000 B{\"a}cklund transformations and Baxter’s Q-operator {\it Integrable
systems: from classical to quantum (Montr{\'e}al, QC, 1999)} CRM Proc. Lecture Notes,
{\bf 26}, Amer. Math. Soc. 227--250.

\bibitem{Tsigan}
Tsiganov A V 2007 A family of the Poisson brackets compatible with the Sklyanin bracket  {\it J. Phys. A: Math. Theor.}  {\bf 40}  4803.

\bibitem{ves4}
Suris Y B, Veselov A P  2003 Lax matrices for Yang--Baxter maps
{\it J. Nonlin. Math. Phys.}  {\bf 10}   223--230.



\bibitem{ves2}
Veselov A P 2003 Yang-Baxter maps and integrable dynamics {\it Phys.
Lett. A}  {\bf 314}  214--221.

\bibitem{ves3}
Veselov A P 2007 Yang-Baxter maps: dynamical point of view
{\it Combinatorial Aspects of Integrable Systems (Kyoto, 2004)} MSJ Mem. {\bf 17}
 145--67.
 
 
\bibitem{Yang} Yang C 1967 Some exact results for the many-body problem in one dimension with
repulsive delta-function interaction {\emph Phys. Rev. Lett.} {\bf 19} 1312--1315.

\end{thebibliography}
\end{document}